\theoremstyle{theorem}
\newtheorem{theorem}{Theorem}
\newtheorem{lemma}{Lemma}
\newtheorem{proposition}{Proposition}
\newcommand{\spinf}[1]{ \inf \langle #1 \rangle }
\newcommand{\Rl}{\mathbb{R}}
\newcommand{\Ss}{\mathscr{S}}
\newcommand{\hscalar}[2]{\langle #1 , \, #2 \rangle }
\newcommand{\code}[1]{{\tt #1}}
\DeclareMathOperator{\supp}{supp}
\begin{document}

\title{Quantum backflow and scattering}

\author{Henning Bostelmann}
\email{henning.bostelmann@york.ac.uk}
\affiliation{Department of Mathematics, University of York, York YO10 5DD, United Kingdom}
\author{Daniela Cadamuro}
\email{daniela.cadamuro@tum.de}
\affiliation{Zentrum Mathematik, Technische Universit\"at M\"unchen, Boltzmannstraße 3, 85748 Garching, Germany}
\author{Gandalf Lechner}
\email{LechnerG@cardiff.ac.uk}
\thanks{GL gratefully acknowledges a London Mathematical Society research in pairs grant, Ref. 41607.}
\affiliation{School of Mathematics, Cardiff University, Cardiff CF10 3AT, United Kingdom}

\date{21 July 2017}

\begin{abstract}
	Backflow is the phenomenon that the probability current of a quantum particle on the line can flow in the direction opposite to its momentum. In this article, previous investigations of backflow, pertaining to interaction-free dynamics or purely kinematical aspects, are extended to scattering situations in short-range potentials. It is shown that backflow is a universal quantum effect which exists in any such potential, and is always of bounded spatial extent in a specific sense. The effects of reflection and transmission processes on backflow are investigated, both analytically for general potentials, and numerically in various concrete examples.
\end{abstract}

\maketitle

\section{Introduction}

Backflow is the striking quantum mechanical effect that for a particle with momentum pointing to the right (with probability 1), the probability of finding the position of the particle to the right of some reference point may {\em decrease} with time. That is, probability can ``flow back'', in the direction opposite to the momentum. This effect was first described by Allcock in the context of the arrival time problem in quantum mechanics \cite{Allcock:1969}, and then discussed in detail by Bracken and Melloy \cite{BrackenMelloy:1994}. More recently, the backflow effect has attracted renewed interest \cite{EvesonFewsterVerch:2003,PenzGrublKreidlWagner:2006,Berry:2010,YearsleyHalliwellHartshornWhitby:2012,HalliwellGillmanLennonPatelRamirez:2013}, partially related to a proposed experiment to measure it \cite{PalmeroTorronteguiModugnoMuga:2013}, and partially because of its connection to other ``quantum inequalities'' appearing in quantum field theory \cite{EvesonFewsterVerch:2003,BostelmannCadamuro:oneparticle}
.

To describe backflow more precisely, consider a wave function $\psi$ defining the state of a single quantum-mechanical particle in one dimension, and its probability current density $j_\psi$. Intuitively, both the statements (with $\tilde\psi$ the Fourier transform of $\psi$)
\begin{enumerate}
     \item[a)] $\psi$ contains only positive momenta, i.e., $\supp\tilde\psi\subset\Rl_+$,
     \item[b)] $j_\psi(x)>0$ for all $x\in\Rl$,
\end{enumerate}
correspond to ``probability flowing from the left to the right''. However, a) and b) are logically independent of each other. Backflow is the observation that a) does not imply~b), that is, the current $j_\psi(x)$ can take {\em negative} values (for certain $x$), even if $\psi$ contains only {\em positive} momenta. (Note that, less surprisingly, b) does not imply a) either: Any wave function of the form $\psi(x)=e^{ipx}\varphi(x)$, where $\varphi(x)$ is real and $p>0$, satisfies b) but in general not a).) Backflow can be seen as a consequence of the uncertainty relation between position and momentum \cite{EvesonFewsterVerch:2003}. In the following, we will use the term ``right-mover'' for wave functions satisfying a).

\medskip

Of the various aspects of backflow that have been analyzed in the literature, let us recall what is known about the temporal and spatial extent of this phenomenon. In the form presented above, backflow is a purely kinematical effect, independent of a choice of dynamics or Hamiltonian. However, when a time evolution given by a self-adjoint Hamiltonian $H$ according to Schrödinger's equation $i\hbar\partial_t\psi_t=H\psi_t$ is fixed, one may study, for example, the amount of probability flowing across a reference point, say $x=0$, during a time interval $[0,T]$. Writing $j^H_\psi(t,x)=j_{\psi_t}(x)$ for the time-dependent current given by the Hamiltonian $H$, this probability is given by
\begin{align}\label{eq:BM-probability}
     {\rm p}^H_\psi(T)=\int_0^Tdt\,j^H_\psi(t,0)\,.
\end{align}
For the free Hamiltonian $H_0=\frac{1}{2m}P^2$ without potential, Bracken and Melloy found \cite{BrackenMelloy:1994} that there exists a universal dimensionless constant $0<\lambda^{H_0}<1$ such that
\begin{align}\label{eq:Bracken-Melloy-bound}
     {\rm p}^{H_0}_\psi(T) \geq -\lambda^{H_0}
\end{align}
for all normalized right-moving wave functions $\psi$ in the sense of a), and all $T>0$. The minus sign indicates that probability flows from the right to the left, i.e., this inequality is a bound on the (averaged) temporal extent of backflow. 

The backflow constant $\lambda^{H_0}$ arises as the largest positive eigenvalue of an integral operator, and has been calculated numerically to be $\lambda^{H_0}\approx0.0384517$ with growing accuracy over the years \cite{BrackenMelloy:1994,EvesonFewsterVerch:2003,PenzGrublKreidlWagner:2006}. For the construction of ``backflow states'' $\psi$ that approximate this maximal backflow, see the recent articles  \cite{YearsleyHalliwellHartshornWhitby:2012,HalliwellGillmanLennonPatelRamirez:2013}.
Backflow constants $\lambda^H$ similarly exist for interacting Hamiltonians $H$; but the kernel of the related integral operator is not explicitly known in general, and we are not aware of results on $\lambda^H$ in the interacting situation.

\smallskip

Whereas the inequality \eqref{eq:Bracken-Melloy-bound} provides a bound on the (averaged) temporal extent of backflow, one can also study its (averaged) spatial extent by considering spatial integrals of the kinematical current $j_\psi(x)$. Eveson, Fewster, and Verch \cite{EvesonFewsterVerch:2003} have shown that 
\begin{align}\label{eq:free-spatial-bound}
     \int dx\,f(x)\,j_\psi(x)\geq c_f>-\infty
\end{align}
for all normalized right-movers $\psi$ and all positive averaging functions $f(x)\geq0$. Here the function $f$ models an extended detector, generalizing the step function used in \eqref{eq:BM-probability}. Their constant $c_f$ (which has dimension of inverse time) depends on $f$, it is recalled in Eq.~\eqref{eq:bound-for-square} below.

Another general bound on (the absolute value of) the current was obtained by Muga and Leavens by expressing $j_\psi(x)$ as the expectation value of $i[\Theta(X-x),H]$ and invoking the general uncertainty relation \cite[footnote 16]{MugaLeavens:2000}.

\medskip

In this article, we extend the analysis of backflow to interacting systems, given by fairly general Hamiltonians of the form $H=\frac{1}{2m}P^2+V(X)$. We begin by recalling and refining some results on kinematical probability currents and their spatial averages in Section~\ref{sec:free}. Since the space of right-movers is no longer invariant under the time evolution if $V$ is not constant, we then propose to look at {\em asymptotic} right-movers in the sense of scattering theory, i.e., states that at very early times contain only positive momenta before scattering with the (short-range) potential. Each interaction-free right-mover $\psi$ is the incoming asymptote of an interacting state $\Omega_V\psi$, where $\Omega_V$ is the incoming M\o ller operator of the Hamiltonian $H$ with potential $V$. This familiar scattering setup is recalled in Section~\ref{sec:scattering}. That section also contains our main analytical results, which we briefly outline here.

In a scattering situation, we consider the current $j_{\Omega_V\psi}(x)$ in the interacting system that has right-moving incoming asymptote $\psi$, and study its spatial backflow, i.e., averages of the form $\int dx\,f(x)j_{\Omega_V\psi}(x)$ for positive smearing functions $f$. We show that in any short-range potential, these averages can be negative despite $\psi$ being right-moving, i.e., backflow is a {\em universal} effect which exists for any such interaction (Thm.~\ref{thm:existence-of-backflow-scattering}). This is not surprising for potentials with reflection, because reflection processes clearly produce probability flow to the left. But our result also holds for reflectionless (transparent) potentials, in which backflow exists, but turns out to be weaker than in the free case.

Generalizing \eqref{eq:free-spatial-bound}, we next study state-independent lower bounds on the averages $\int dx\, f(x)\,j_{\Omega_V\psi}(x)$ that hold for all normalized incoming right-movers $\psi$, with fixed averaging function $f(x)\geq0$. Since reflection processes amplify backflow, it is not clear a priori whether
\begin{equation}
\begin{aligned}
     \beta_V(f):=\inf\Big\{\int dx\, & f(x)\,j_{\Omega_V\psi}(x)\,: \\  & \|\psi\|=1\,,\;\psi\;\text{right-moving}\Big\}
\end{aligned}
\end{equation}
is finite. However, our analysis shows that backflow {\em is} always bounded, $\beta_V(f)>-\infty$ for all short-range potentials $V$ and all positive smearing functions $f$ (Thm.~\ref{thm:scattering-bounds}). We also derive explicit analytic estimates on the constants $\beta_V(f)$ from the asymptotic spatial behavior of the scattering solutions to Schrödinger's equation with potential $V$. Going through the analysis, it turns out that backflow can only become unbounded at large momentum. At large momentum, however, reflection is sufficiently well suppressed, which provides a heuristic understanding of this result.

Our analytic results are complemented by examples and numerical studies, presented in Section~\ref{sec:examples}. With custom computer code, supplied with this article \cite{anc-arxiv}, we study four example potentials: a delta potential as a simple extremely short range example, a rectangular potential, a reflectionless Pöschl-Teller potential, and the zero potential as a reference. Their backflow constants are calculated numerically, and their dependence on the potential strength and the position of the (Gauß type) smearing function $f$ is visualized. We also show the corresponding backflow maximizing states in that section, and discuss their properties. The numerics underlying these results is explained in more detail in the Appendix. In particular, the code can be adapted to study backflow in arbitrary short range potentials.

In Section~\ref{sec:conclusion}, we give a summary and outlook.

\section{Bounds on probability currents}\label{sec:free}

The setting of our investigation is the standard framework of quantum mechanics of one particle of mass $m>0$ in one spatial dimension. It will be convenient to work with dimensionless variables $x,p$, etc., and dimensionless functions (such as the wave function $\psi$ and the current $j_\psi$) by using a length scale $\ell$ as the unit of length, $\hbar/\ell$ as the unit of momentum, $m\ell^2/\hbar$ as the unit of time, and $\hbar^2/m\ell^2$ as the unit of energy, effectively setting $\hbar=m=1$. Thus, a square integrable wave function $\psi\in L^2(\Rl)$ defines the position probability density $|\psi(x)|^2$, and its Fourier transform $\tilde\psi(p)=(2\pi)^{-1/2} \int dx\, e^{-ipx} \psi(x)$ defines the momentum probability density $|\tilde{\psi}(p)|^2$. The operators of position, momentum, and kinetic energy are $X$, $P=-i\partial_x$, and $\frac{1}{2}P^2=-\frac{1}{2}\partial_x^2$, respectively.

With any (differentiable) wave function $\psi$, we associate its probability current density
\begin{align}\label{eq:j_psi}
     j_\psi(x)=\frac{i}{2}\left(\overline{\psi'(x)}\psi(x)-\overline{\psi(x)}\psi'(x)\right),
\end{align}
where a prime denotes a derivative with respect to $x\in\Rl$. 

\bigskip

In the context of backflow, ``right-moving'' wave functions are important, and we will write $E_\pm$ for the spectral projections of the momentum operator, corresponding to positive/negative momentum, i.e., 
\begin{align}
     \widetilde{(E_\pm\psi)}(p)=\Theta(\pm p)\tilde\psi(p)\,,
\end{align}
where $\Theta$ is the Heaviside step function. With this notation, the right-moving wave functions in statement a) in the Introduction are characterized by $E_+\psi=\psi$.

\bigskip

It is well known that locally, the backflow effect can be arbitrarily large: Given any $x\in\Rl$ and any $c>0$, there exists a normalized right moving wave function $\psi=E_+\psi$ such that $j_\psi(x)<-c$. Similarly, one can also arrange for arbitrarily large ``forward flow'', i.e., find normalized $\psi=E_+\psi$ with $j_\psi(x)>c$. 

These facts can be shown by a scaling argument; note that the (dimensionful) probability current density has the physical dimension of inverse time, so that a change of units scales its numerical value \cite{BrackenMelloy:1994}. We give here a different proof which results in more specific bounds that will be needed in the next section.

\begin{proposition}\label{proposition:jx-unbounded}{\bf (Unboundedness of $\boldsymbol{j_\psi(x)}$)}
     Let $x\in\Rl$. Then there exist sequences $\psi_n^\pm\in E_+L^2(\Rl)$ of right-moving wave functions such that
     \begin{align}
     \lim_{n\to\infty}j_{\psi_n^\pm}(x)=\pm\infty\,,
     \end{align}
     and the norms $\|\psi_n^\pm\|^2=\int dx\,|\psi^\pm_n(x)|^2$ and $\|\tilde\psi_n^\pm\|_1=\int dp\,|\tilde\psi^\pm(p)|$ are independent of $n$. 
\end{proposition}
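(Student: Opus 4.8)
The plan is to reduce everything to a single reference point and then read off the current from the Fourier data of $\psi$. First I would note that it suffices to treat $x=0$: if $\psi_0$ realizes the blow-up at the origin, then the translate $\psi(y):=\psi_0(y-x)$ has Fourier transform $e^{-ipx}\tilde\psi_0(p)$, so it is again right-moving, has the same $L^2$-norm and the same value of $\|\tilde\psi\|_1$, and satisfies $j_\psi(x)=j_{\psi_0}(0)$. Thus all three invariances required in the statement are inherited automatically from the case $x=0$.

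Next I would express the current at the origin through the momentum-space data. Writing $j_\psi(x)=\operatorname{Im}\big(\overline{\psi(x)}\,\psi'(x)\big)$ and inserting $\psi(0)=(2\pi)^{-1/2}\int dp\,\tilde\psi(p)$ together with $\psi'(0)=(2\pi)^{-1/2}\,i\int dp\,p\,\tilde\psi(p)$ gives
\begin{align}
j_\psi(0)=\frac{1}{2\pi}\,\operatorname{Re}\big(\overline{A}\,B\big),\qquad A:=\int dp\,p\,\tilde\psi(p),\quad B:=\int dp\,\tilde\psi(p).
\end{align}
For this pointwise evaluation to be legitimate I would restrict to $\tilde\psi$ with compact support, which are then automatically in $L^1$ along with $p\,\tilde\psi$, so that $\psi$ is $C^1$ and the integrals converge.

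The key observation is that the $L^1$-constraint forces $|B|\le\|\tilde\psi\|_1$ to stay bounded, so the blow-up must be driven entirely by $A$, i.e.\ by concentrating mass at large momentum. For the $+\infty$ case I would take a single travelling bump $\tilde\psi_n^+(p)=\phi(p-n)$ with $\phi\ge 0$ a fixed real profile supported in $[-1,1]$; for $n\ge1$ its support lies in $\Rl_+$, both norms equal those of $\phi$ and are hence independent of $n$, while $B=\int\phi>0$ is constant and $A=n\int\phi+\int q\,\phi(q)\,dq$ grows linearly. Hence $j_{\psi_n^+}(0)=\tfrac{1}{2\pi}\big(n\,(\int\phi)^2+\text{const}\big)\to+\infty$.

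For the $-\infty$ case this linear growth has the wrong sign, so I would superpose the travelling bump with a fixed, $n$-independent low-momentum bump $\chi$ supported in $[0,\varepsilon]$ and chosen with negative integral, $\int\chi<-\int\phi$. Taking $n$ large enough that the two bumps are disjointly supported keeps both norms $n$-independent and the support inside $\Rl_+$, while the leading term of $\operatorname{Re}(\overline A\,B)$ becomes $n\,\int\phi\,(\int\chi+\int\phi)<0$, forcing $j_{\psi_n^-}(0)\to-\infty$. I expect the only genuine subtlety to be precisely this sign engineering, since complex conjugation — the obvious way to flip the sign of the current — turns a right-mover into a left-mover and is therefore unavailable; with the two fixed profiles in hand, the rest is routine bookkeeping.
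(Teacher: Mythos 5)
Your argument is correct. The translation reduction to $x=0$, the formula $j_\psi(0)=\tfrac{1}{2\pi}\operatorname{Re}(\overline{A}B)$ with $A=\int dp\,p\,\tilde\psi(p)$ and $B=\int dp\,\tilde\psi(p)$, and both limit computations all check out (with the harmless caveat that the norm identities for $\psi_n^-$ only hold once $n$ is large enough for the two bumps to have disjoint supports, which a reindexing fixes). The route coincides with the paper's in its first half --- the paper also obtains $+\infty$ by shifting a fixed right-mover to high momentum, via the pointwise identity $j_{\psi_n^+}(x)=j_{\psi^+}(x)+n\,|\psi^+(x)|^2$ rather than your momentum-space formula --- and it uses the same two-bump ansatz (a fixed low-momentum piece superposed with a piece translated to momentum $n$) in its second half. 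Where you genuinely diverge is in how the negative sign is extracted: the paper keeps a single profile $\tilde\chi$ for both bumps, writes $j_{\psi_n^-}(x)$ as a quadratic form in the complex coefficients $(\alpha,\beta)$, and chooses $(\alpha,\beta)$ as an eigenvector of a $2\times 2$ matrix whose smaller eigenvalue tends to $(1-\sqrt{2})\,|\chi(x)|^2/2<0$; you instead take two different real profiles and arrange $\int\chi<-\int\phi$ so that $B<0$ is fixed while $A\sim n\int\phi\to+\infty$, making the sign manifest with no eigenvalue computation. Your version is more elementary, but it is tied to the evaluation point $x=0$ (equivalently, the phases $e^{-ipx}$ must be absorbed into the profiles before the sign condition can be imposed), whereas the paper's eigenvector computation works directly at arbitrary $x$ and, as a bonus, identifies the asymptotically optimal superposition coefficients --- a small preview of the eigenvalue problems that govern the backflow constants later in the paper.
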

\begin{proof}
     The unboundedness from above is a high momentum effect. To construct the sequence $\psi_n^+$, we select a right-moving wave function $\psi^+$ such that $E_+\psi^+=\psi^+$ and the current $j_{\psi^+}$ exists, and shift it to higher and higher momentum, $\tilde\psi_n^+(p):=\tilde\psi^+(p-n)$, $n\in\mathbb N$. From this construction, it is clear that $\psi_n^+=E_+\psi_n^+$, and the norms $\|\psi_n^+\|$ and $\|\tilde\psi_n^+\|_1$ are independent of $n$. Furthermore, the current of $\psi_n^+$ is
     \begin{align}\label{eq:j-shift}
	  j_{\psi_n^+}(x)=j_{\psi^+}(x)+n\,|\psi^+(x)|^2\,,
     \end{align}
     as can be quickly checked on the basis of \eqref{eq:j_psi}. When we choose $\psi^+$ such that $\psi^+(x)\neq0$ (which is clearly possible), we find $j_{\psi_n^+}(x)\to\infty$ as $n\to\infty$.
     
     To demonstrate unboundedness from below, we construct a sequence $\psi_n^-$ by superposition of a high and a low momentum state. We choose a function $\chi$ such that $\tilde\chi$ has compact support on the right half line, and $\chi(x)\neq0$. Such functions exist for any $x$, and are by construction right-movers, $E_+\chi=\chi$. We then consider the linear combinations $\tilde\psi_n^-(p):=\alpha\tilde\chi(p)+\beta\tilde\chi(p-n)$, where $n\in\mathbb N$, and $\alpha,\beta\in\mathbb C$. By construction, $E_+\psi_n^-=\psi_n^-$, and by the compact support property, we have for large enough $n$ the equalities $\|\psi_n^-\|^2=(|\alpha|^2+|\beta|^2) \| \chi\|^2$ and $\|\tilde\psi_n^-\|_1=(|\alpha|+|\beta|)\| \tilde \chi\|_1$. 
     
     It remains to choose $\alpha,\beta$ in such a way that $j_{\psi_n^-}(x)\to-\infty$ as $n\to\infty$. To do so, we calculate
     \begin{equation}
	  j_{\psi_n^-}(x)
	  =
	  \left(\begin{array}{c}\overline\alpha\\\overline\beta\end{array}\right)^t
	  \big(j_{\chi}(x)\cdot \openone + n A_n \big)\left(\begin{array}{c}\alpha\\\beta\end{array}\right),
     \end{equation}
     where
     \begin{equation}
       A_n:= \begin{pmatrix}
		 0 & e^{inx}(\frac{j_{\chi}(x)}{n}+\frac{|\chi(x)|^2}{2})\\
		 e^{-inx}(\frac{j_{\chi}(x)}{n}+\frac{|\chi(x)|^2}{2}) & |\chi(x)|^2
	     \end{pmatrix} .
     \end{equation}
     The $(2\times 2)$ matrix $A_n$ is hermitian, has trace $|\chi(x)|^2$, and determinant $\det A_n\to-|\chi(x)|^4/4<0$ as $n\to\infty$. Thus the eigenvalues $\lambda_\pm(n)$ of $A_n$ converge to $\lambda_\pm(n)\to (1\pm \sqrt{2}) |\chi(x)|^2/2$. Choosing $\alpha,\beta$ as the coordinates of an eigenvector with the negative eigenvalue $(1- \sqrt{2}) |\chi(x)|^2/2$ then results in $j_{\psi_n^-}(x)\to-\infty$ as $n\to\infty$ because of the explicit prefactor $n$ in front of $A_n$.
\end{proof}

The superpositions constructed in the second part of the proof are examples of states in which backflow occurs (``backflow states''). For other examples, see \cite{YearsleyHalliwellHartshornWhitby:2012,HalliwellGillmanLennonPatelRamirez:2013}.

\bigskip

On the mathematical side, the observable $J(x)$, defined as
\begin{align}
     \hscalar{\psi}{J(x)\psi}:=j_\psi(x)
\end{align}
exists only as a quadratic form, and is unbounded above and below on $E_+L^2(\Rl)$ by the results above. This quadratic form will be our main tool in studying the dependence of the probability current density on the wave function. In particular, $J$ encodes bounds on spatial averages of the backflow against (positive) Schwartz-class test functions $f\in\Ss(\Rl)$, written as
\begin{align}
     \hscalar{\psi}{J(f)\psi}
     :=
     \int dx\,f(x)\,j_\psi(x)\,.
\end{align}
It is readily checked that $J(f)$ exists as an (unbounded) operator, hermitian for real $f$, and can be expressed in terms of the position and momentum operators as
\begin{align}\label{eq:Jf-Schrodinger}
     J(f)=\frac{1}{2}\left(Pf(X)+f(X)P\right).
\end{align}
The fact that backflow exists is reflected in the fact that $E_+J(f)E_+$, the averaged current evaluated in right-moving states, is not positive. To formulate this concisely, let us denote by 
\begin{align}
     \spinf{ A}
     :=\inf_{\|\psi\|=1}\langle\psi,A\psi\rangle
     \in[-\infty,\infty)
\end{align}
the bottom of the spectrum of a hermitian operator $A$, i.e., the infimum of all its generalized eigenvalues. Then the maximal amount of backflow, spatially averaged with $f$, is defined as
\begin{align}\label{eq:beta0}
     \beta_0(f):= \spinf{ E_+J(f)E_+ }\,.
\end{align}

In the following Theorem~\ref{thm:backflow-bounds-basic}, we summarize three properties $i)$-$iii)$ of $J(f)$ that are relevant to our investigation. Part $i)$ is concerned with the {\em existence} of backflow: By Prop.~\ref{proposition:jx-unbounded}, we can pick positive test functions $f$ such that $E_+J(f)E_+$ is not positive. Below we give a stronger argument, showing that $\beta_0(f)<0$ for {\em each real} $f\neq0$. 

Having settled the existence of backflow, the next question concerns the strength of this effect. In part $ii)$, we remark that $E_+J(f)E_+$ is unbounded {\em above} for positive $f$, just as $E_+J(x)E_+$. This is intuitively clear, saying that there is no restriction on probability flow to the right for right-moving waves, and follows in a similar manner as in the first part of Prop.~\ref{proposition:jx-unbounded}. 

A more delicate question is whether there exist {\em lower} bounds on the spectrum of the smeared probability current $E_+J(f)E_+$, i.e., whether $\beta_0(f)>-\infty$. In fact, $E_+J(f)E_+$ is bounded below, in contrast to $E_+J(x)E_+$. Part $iii)$ recalls a result proven by Eveson, Fewster, and Verch \cite{EvesonFewsterVerch:2003} in this context.

\begin{theorem}\label{thm:backflow-bounds-basic}{\bf (Existence and boundedness of spatially averaged backflow)}
	\begin{enumerate}
		\item\label{it:backflowallf} For any real $f\neq0$, the smeared probability flow in right-moving states, $E_+J(f)E_+$, is not positive, $\beta_0(f)<0$.
		\item\label{it:backflowubabove} Let $f>0$. Then there is no finite upper bound on $E_+J(f)E_+$.
		\item\label{it:backflowbdbelow} \cite{EvesonFewsterVerch:2003} Let $f>0$. Then $E_+J(f)E_+$ is bounded below, i.e., $\beta_0(f)>-\infty$. For test functions of the form $f=g^2$ for some real $g\in\Ss(\Rl)$, one has
		\begin{align}\label{eq:bound-for-square}
			\beta_0(g^2)
			\geq
			-\frac{1}{8\pi}\int dx\,|g'(x)|^2>-\infty\,.
		\end{align}
	\end{enumerate}
\end{theorem}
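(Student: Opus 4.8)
The plan is to treat the three parts separately, working throughout in momentum space, where for right-movers the quadratic form of $J(f)$ has the explicit integral kernel
\[
     K(p,q)=\tfrac{1}{2}(2\pi)^{-1/2}(p+q)\,\tilde f(p-q),\qquad p,q>0,
\]
obtained by rewriting \eqref{eq:Jf-Schrodinger} with $f(X)$ acting as convolution by $(2\pi)^{-1/2}\tilde f$ on $\tilde\psi\in L^2(\Rl_+)$. For part \ref{it:backflowallf} I would exhibit a two-dimensional subspace of right-movers on which the form is indefinite. Concretely, take normalized wave packets $u_a,u_b$ whose momentum supports are narrow and disjoint, concentrated near two points $a,b>0$. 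In the narrow-support limit the $(2\times 2)$ matrix $\langle u_i,J(f)u_j\rangle$ becomes, up to a positive factor, the hermitian matrix with diagonal $a\,\tilde f(0),\,b\,\tilde f(0)$ and off-diagonal entry $\tfrac12(a+b)\tilde f(a-b)$, whose determinant has the sign of $ab\,\tilde f(0)^2-\tfrac14(a+b)^2|\tilde f(a-b)|^2$. Since $f\neq 0$ is real, $\tilde f$ cannot vanish on all of $\Rl_+$, so one may fix $b>0$ with $\tilde f(b)\neq 0$ and let $a\to 0^+$: the first term then vanishes while the second stays bounded away from $0$, forcing the determinant negative and producing a negative eigenvalue (if $\tilde f(0)=0$ the determinant is negative outright). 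This yields a right-mover with $\langle\psi,J(f)\psi\rangle<0$, i.e.\ $\beta_0(f)<0$. The only delicate point is controlling the narrow-support limit, which is routine since a strictly negative eigenvalue of the limiting matrix is stable under small perturbations.

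Part \ref{it:backflowubabove} is the smeared analog of the first half of Prop.~\ref{proposition:jx-unbounded}. Starting from a right-mover $\chi$ and shifting it to momentum $n$, $\tilde\psi_n(p):=\tilde\chi(p-n)$, a direct computation from $K$ gives the exact identity $\langle\psi_n,J(f)\psi_n\rangle=\langle\chi,J(f)\chi\rangle+n\langle\chi,f(X)\chi\rangle$. For $f>0$ the coefficient $\langle\chi,f(X)\chi\rangle=\int dx\,f(x)|\chi(x)|^2$ is strictly positive, so the expectation diverges to $+\infty$ while $\|\psi_n\|=\|\chi\|$ stays fixed; hence there is no finite upper bound.

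For part \ref{it:backflowbdbelow} the key is the algebraic identity $J(g^2)=\tfrac12(Pg(X)^2+g(X)^2P)=g(X)\,P\,g(X)$, which follows from $[P,g(X)]=-ig'(X)$. Writing $\phi:=g(X)\psi$ it turns the form into $\langle\psi,J(g^2)\psi\rangle=\langle\phi,P\phi\rangle=\int dp\,p\,|\tilde\phi(p)|^2$. Discarding the nonnegative positive-momentum part yields $\langle\phi,P\phi\rangle\geq-\int_{-\infty}^0 dp\,|p|\,|\tilde\phi(p)|^2$, so it remains to bound the negative-momentum content of $\phi$. Since $\psi$ is right-moving, $\tilde\phi(p)=(2\pi)^{-1/2}\int_0^\infty dq\,\tilde g(p-q)\tilde\psi(q)$, and Cauchy--Schwarz together with Fubini gives
\[
     \int_{-\infty}^0 dp\,|p|\,|\tilde\phi(p)|^2
     \leq
     \frac{\|\psi\|^2}{2\pi}\int_{-\infty}^0 dp\,|p|\int_0^\infty dq\,|\tilde g(p-q)|^2
     =
     \frac{\|\psi\|^2}{4\pi}\int_0^\infty dr\,r^2|\tilde g(-r)|^2 .
\]
The decisive step uses that $g$ is real, so $|\tilde g(-r)|=|\tilde g(r)|$ and therefore $\int_0^\infty dr\,r^2|\tilde g(-r)|^2=\tfrac12\int dr\,r^2|\tilde g(r)|^2=\tfrac12\int dx\,|g'(x)|^2$, giving exactly $\beta_0(g^2)\geq-\tfrac{1}{8\pi}\int dx\,|g'(x)|^2$. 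Boundedness for general $f>0$ then follows by writing $f=g^2$ with $g=\sqrt f$.

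I expect the main obstacle to be pinning down the sharp constant $\tfrac{1}{8\pi}$ in part \ref{it:backflowbdbelow}: the operator identity and the discarding of the positive-momentum part are clean, but a crude Cauchy--Schwarz bounds $\int_0^\infty dr\,r^2|\tilde g(-r)|^2$ only by $\int dx\,|g'(x)|^2$ and so loses a factor $2$. It is precisely the reality of $g$, hence the symmetry $|\tilde g(-r)|=|\tilde g(r)|$, that supplies the missing factor $\tfrac12$ and upgrades $\tfrac{1}{4\pi}$ to the optimal $\tfrac{1}{8\pi}$.
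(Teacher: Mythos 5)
Your proposal is correct, and for parts \ref{it:backflowallf} and \ref{it:backflowubabove} it is essentially the paper's argument. For \ref{it:backflowallf} the paper argues by contradiction: if $E_+J(f)E_+$ were positive, the kernel matrix $\bigl(K_f(p_i,p_j)\bigr)$ would have non-negative determinant, and sending $p\to 0$ at fixed $q$ forces $\tilde f(-q)=0$ for all $q>0$, hence $f=0$ by reality and continuity. Your narrow-wave-packet construction runs the same $2\times 2$ determinant computation forwards to exhibit a state with negative expectation; this is in fact the rigorous justification of the paper's step that positivity of the operator implies positivity of the pointwise kernel matrix, so the two are interchangeable. Part \ref{it:backflowubabove} is identical (the exact shift identity $j_{\psi_n}=j_\chi+n|\chi|^2$ integrated against $f$). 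The genuine difference is in part \ref{it:backflowbdbelow}: the paper does not prove it, citing \cite{EvesonFewsterVerch:2003} and only sketching the decomposition $J(g^2)=J_+(g^2)-J_-(g^2)$ with $\beta_0(g^2)\geq-\|E_+J_-(g^2)E_+\|$, whereas you carry out the estimate in full. Your computation is sound: the change of variables gives $\int_{-\infty}^0 dp\,|p|\int_0^\infty dq\,|\tilde g(p-q)|^2=\tfrac12\int_0^\infty dr\,r^2|\tilde g(-r)|^2$, and the reality of $g$ supplies the extra factor $\tfrac12$ via $|\tilde g(-r)|=|\tilde g(r)|$ and Plancherel, landing exactly on $-\tfrac{1}{8\pi}\int|g'|^2$; this reproduces the cited argument. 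The one thin spot is your closing reduction of general $f>0$ to $f=g^2$ with $g=\sqrt f$: a strictly positive Schwartz function need not have a Schwartz (or even polynomially bounded-derivative) square root, so this step needs either a mild extra hypothesis or a decomposition argument; the paper sidesteps the issue by deferring the whole of part \ref{it:backflowbdbelow} to the reference.
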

\begin{proof}
     \emph{i)} The operator $E_+J(f)E_+$ defines an integral operator on $L^2(\Rl_+,dp)$. In view of \eqref{eq:Jf-Schrodinger}, its integral kernel is 
     \begin{align}\label{eq:basic-kernel}
	  K_f(p,q)=\frac{p+q}{2\sqrt{2\pi}}\,\tilde f(p-q),\qquad p,q\geq0\,.
     \end{align}
     If $E_+J(f)E_+$ is positive, then for any $p,q>0$, the hermitean matrix
     \begin{align}
	  \left(
		    \begin{array}{cc}
			 K_f(p,p) & K_f(p,q)\\
			 K_f(q,p) & K_f(q,q)
		    \end{array}
	  \right)
     \end{align}
     has only non-negative eigenvalues, and in particular a non-negative determinant
     \begin{equation}
     \begin{aligned}
	  0 &\leq K_f(p,p)K_f(q,q)-|K_f(p,q)|^2
	  \\ &=
	  \frac{pq}{2\pi}\,|\tilde f(0)|^2-\frac{(p+q)^2}{8\pi}\,|\tilde f(p-q)|^2\,.
     \end{aligned}
     \end{equation}
     This implies
     \begin{align}
	  |\tilde f(p-q)|
	  \leq
	  \frac{2\,\sqrt{p\,q}}{p+q}\,|\tilde f(0)|\,.
     \end{align}
     Now taking $p\to0$ at fixed $q>0$ shows that $|\tilde f(-q)|=0$ for all $q>0$. But since $f$ is real, $\tilde f(-q)=\overline{\tilde f(q)}$, so that $\tilde f(q)=0$ for each $q\neq0$. As the test function $f$ is continuous, this implies that $f=0$ has to vanish altogether. So we conclude that for any real $f\neq0$, the operator $E_+J(f)E_+$ is not positive.

     \smallskip

     \emph{ii)} Similar to the proof of the first part of Prop.~\ref{proposition:jx-unbounded}, we take a normalized and right-moving $E_+\psi=\psi\in L^2(\Rl)$, and define shifted momentum space wave functions, $\tilde\psi_n(p):=\tilde\psi(p-n)$, where $n>0$. Then also $\psi_n$ is normalized, $E_+\psi_n=\psi_n$, and has the expectation value (cf. \eqref{eq:j-shift} integrated against $f(x)$ over $x$)
     \begin{equation}
     \begin{aligned}
	  \hscalar{\psi_n}{E_+J(f)E_+\psi_n}
	  &= 
	  \hscalar{\psi}{E_+J(f)E_+\psi}
	  \\ &\quad +
	  n\,\int dx\,f(x)\,|\psi(x)|^2\,.
     \end{aligned}
     \end{equation}
     For $f>0$, it is clear that we can choose $\psi$ in such a way that the last integral is not zero. In that case, $\langle \psi_n,E_+J(f)E_+\psi_n\rangle\to\infty$ as $n\to\infty$, showing that there is no finite upper bound on the spectrum of $E_+J(f)E_+$.
\end{proof}

The spectrum of the operator $E_+J(f)E_+$ cannot be determined explicitly, and analytic methods are restricted to providing bounds on the backflow effect. (Numerical results will be presented in Section \ref{sec:examples}.) 

The significance of the mild additional assumption $f=g^2$ in Thm.~\ref{thm:backflow-bounds-basic}\ref{it:backflowbdbelow} is due to the fact that in this case, $J(f)$ takes a simpler form. Namely, in view of Heisenberg's commutation relation $[X,P]=i$, one obtains the more symmetric formula $J(g^2)=g(X)Pg(X)$. With the help of the spectral projections $E_\pm$ of $P$, one may then write $J(g^2)$ as a difference of two positive operators,
\begin{equation}
\begin{aligned}\label{eq:J-split}
     J(g^2) & =J_+(g^2)-J_-(g^2)
     \\
    \text{with } J_\pm(g^2) &:= \pm g(X)PE_\pm g(X)\,,
\end{aligned}
\end{equation}
which yields the estimate $\beta_0(g^2)\geq-\|E_+J_-(g^2)E_+\|$. The inequality \eqref{eq:bound-for-square} can then be established by estimating this norm \cite{EvesonFewsterVerch:2003}. 

From \eqref{eq:J-split} we see that the negative part $-J_-(g^2)$ (without restriction to right-moving waves) is unbounded, but the unboundedness occurs, roughly speaking, only at high momentum. This will be important in our subsequent investigation of backflow in scattering situations.

\section{Backflow and scattering} \label{sec:scattering}

Let us now consider a quantum mechanical system as before, but with non-trivial interaction given by a time-independent external potential $V$, so that the Hamiltonian is of the form $H = \frac{1}{2}P^2+ V(X)$.

In this situation the time evolution does no longer leave the space of right-movers $E_+L^2(\Rl)$ invariant. Hence, what constitutes a particle that ``travels to the right'' is less clear.  As a substitute, we propose to look at \emph{asymptotic} momentum distributions in the sense of scattering theory, that is, states whose incoming asymptote is a right-mover. This space \emph{is} invariant under the time evolution; it describes particles scattering ``from the left'' onto the potential. The connection between an asymptotic state $\psi$ and the ``interacting state" $\Omega_V\psi$ is given by the incoming \emph{M{\o}ller operator}
\begin{equation}\label{omega:time}
\Omega_V := \operatorname*{s-lim}_{t \rightarrow - \infty} e^{iHt} e^{-iH_0 t}\,,
\end{equation}
where $\operatorname{s-lim}$ denotes the strong operator limit. We remark that, although $\Omega_V$ is not unitary in the presence of bound states, we still have $\lVert \Omega_V \rVert = 1$.

We will now look at the averaged probability current $J(f)$ in states with right-moving asymptote $\psi=E_+\psi$. That is, we consider the ``asymptotic current operator'' 
$E_+ \Omega_V^\ast J(f)\Omega_V E_+$
and investigate its spectral properties -- whether it is unbounded above (unlimited forward flow), 
bounded below (limited backflow), and how to estimate
\begin{equation}\label{main}
\beta_V(f) := \spinf{ E_+ \Omega_V^\ast J(f) \Omega_V E_+}\,,
\end{equation}
the ``asymptotic backflow constant''. For $V=0$, one has $\Omega_V=1$ and hence recovers the previously discussed $\beta_0(f)$, see \eqref{eq:beta0}. Our main result will be that also for short-range potentials $V \neq 0$, one has $\beta_V(f) > -\infty$ for any non-negative test function $f$.

In order to be able to do scattering theory, we work with potentials $V(x)$ that vanish sufficiently fast as $x \to \pm\infty$. Specifically, we consider real-valued  potentials $V$ for which the norm
\begin{equation}
\lVert  V \rVert_{1+} := \int dx\, (1 + \lvert x\rvert) \lvert  V(x)\rvert < \infty
\end{equation}
is finite; we refer to this class as $L^{1+}(\mathbb{R})$. In this case  solutions of the time-independent Schr\"odinger equation on the line,
\begin{equation}\label{schroe}
(-\partial_x^2 + 2V(x) - k^2)\psi(x) =0,\quad k \in \Rl\,,
\end{equation}
are scattering states. In the stationary picture of scattering theory, of particular interest are the solutions $x \mapsto \varphi_k(x)$, $k>0$, of \eqref{schroe} with the asymptotics 
\begin{equation}\label{eq:asymp}
\varphi_k(x) = \begin{cases} 
T_V(k) e^{ikx} + o(1) & \text{for } x \gg 0, \\
 e^{ikx} +R_V(k)e^{-ikx} + o(1)& \text{for }x \ll 0\,,
\end{cases}
\end{equation} 
where $R_V(k)$ and $T_V(k)$ denote the reflection and transmission coefficients of the potential $V$, respectively. Let us recall the basic results of scattering theory in this context; see, e.g., \cite{Yafaev:analytic}.

\begin{lemma}
Let $V\in L^{1+}(\mathbb{R})$. Then the operator $\Omega_V$ defined in \eqref{omega:time} exists. Further, the solution $x \mapsto \varphi_k(x)$ ($k>0$) of \eqref{schroe} with the asymptotics \eqref{eq:asymp} exists and is unique, and for any $\tilde\psi \in C_0^\infty(\mathbb{R})$, 
\begin{equation}\label{eq:omegakernel}
(\Omega_V E_+ \psi)(x) = \frac{1}{\sqrt{2\pi}}\int_0^\infty dk\, \varphi_k(x) \tilde\psi(k)\,.
\end{equation} 
\end{lemma}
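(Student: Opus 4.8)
The plan is to prove the three assertions in turn, drawing on the stationary theory of one-dimensional scattering for short-range potentials, for which the moment condition $V\in L^{1+}(\Rl)$ is precisely tailored. For the existence of the strong limit \eqref{omega:time} I would use Cook's method. On the dense set of $\psi$ whose Fourier transform $\tilde\psi$ lies in $C_0^\infty(\Rl\setminus\{0\})$ it suffices to verify the integrability criterion $\int_{-\infty}^0 \|V(X)\,e^{-iH_0 t}\psi\|\,dt<\infty$. The free propagator disperses such a packet so that it is concentrated, with $O(|t|^{-1/2})$ amplitude, in the region where $x/t\in\supp\tilde\psi$, which recedes to infinity at nonzero speed (the group velocity being bounded away from $0$); the rapidly decaying off-support tails are controlled by nonstationary phase. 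Since $V$ decays in the region reached by the packet, the decay of $V$ encoded in $\|V\|_{1+}$ combines with the dispersive estimate to make $t\mapsto\|V(X)e^{-iH_0t}\psi\|$ integrable on $(-\infty,0]$, which yields $\Omega_V$.

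For existence and uniqueness of $\varphi_k$ I would construct it from the Jost solutions $f_\pm(\cdot,k)$ of \eqref{schroe}, characterized by $e^{\mp ikx}f_\pm(x,k)\to 1$ as $x\to\pm\infty$. Each $f_\pm$ solves a Volterra integral equation with kernel $k^{-1}\sin(k(x-y))\cdot 2V(y)$ (up to sign), and the first-moment condition guarantees absolute convergence of the Neumann series, hence existence, uniqueness, and joint continuity in $(x,k)$ for $k>0$; the moment is exactly what controls these estimates uniformly down to $k=0$. For real $k\neq0$ the pair $\{f_+(\cdot,k),f_+(\cdot,-k)\}$ is a fundamental system, and expressing $f_-$ in this basis produces the reflection and transmission coefficients $R_V,T_V$ of \eqref{eq:asymp} via Wronskians. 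The solution with the prescribed asymptotics is then the unique multiple of $f_+$ (equivalently of $f_-$) whose incoming component at $x\to-\infty$ is normalized to $e^{ikx}$: the requirement of a purely outgoing wave $\sim e^{ikx}$ at $+\infty$ already selects a one-dimensional subspace of the two-dimensional solution space, and the normalization of the incident amplitude fixes the remaining constant.

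The expansion \eqref{eq:omegakernel} is the stationary representation of the incoming wave operator, and is the core of the lemma. I would pass from the time-dependent definition \eqref{omega:time} to the stationary one, equivalently invoking the generalized eigenfunction expansion theorem for $H$: the continuum eigenfunctions $\{\varphi_k\}_{k>0}$ together with the left-incident solutions and the finitely many bound states diagonalize $H$ and are complete in the generalized sense. The point to verify is that the \emph{incoming} limit $t\to-\infty$ at positive momentum singles out precisely the boundary condition in \eqref{eq:asymp}, namely an incident wave from the left plus transmitted and reflected parts; this is the $+i0$ (Lippmann--Schwinger) prescription associated with $\Omega_V$. Granting this, $\Omega_V$ intertwines $H_0$ and $H$ and carries the improper momentum eigenfunction $e^{ikx}$ of energy $k^2/2$ to the eigenfunction $\varphi_k$ of the same energy for $k>0$, which is exactly the stated formula.

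The delicate step—and the main obstacle—is this last identification: rigorously connecting the time-dependent M\o{}ller operator with the stationary eigenfunctions requires the completeness of the generalized eigenfunctions and the precise matching of boundary conditions and normalizations, uniformly including the low-momentum regime $k\to0$ where the first-moment hypothesis is indispensable (it excludes a zero-energy resonance/bound state that would otherwise spoil the estimates). For the class $V\in L^{1+}(\Rl)$ all of these facts are classical, so rather than reproduce the analysis I would cite the detailed treatment in \cite{Yafaev:analytic}.
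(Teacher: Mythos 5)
Your proposal is correct and follows essentially the same route as the paper, whose ``proof'' consists entirely of citations to the classical literature (existence of $\Omega_V$ and the stationary representation \eqref{eq:omegakernel} from Yafaev, Ch.~5, Thm.~1.12; existence and uniqueness of $\varphi_k$ from Ch.~5, Lemma~1.1): you flesh out the standard arguments behind those citations --- Cook's method, the Volterra/Jost-solution construction, and the eigenfunction expansion --- and then, correctly identifying the time-dependent-to-stationary identification as the genuinely delicate step, defer to the same reference. One parenthetical claim is inaccurate, though harmless here: the hypothesis $V\in L^{1+}(\Rl)$ does \emph{not} exclude zero-energy resonances (the reflectionless P\"oschl--Teller potentials of Sec.~\ref{sec:examples} lie in $L^{1+}(\Rl)$ and have $T_V(0)\neq 0$); the first-moment condition is what makes the Jost iteration converge uniformly down to $k=0$, and since the lemma only concerns $k>0$ this does not affect your argument.
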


Existence and uniqueness of $\varphi_k$ are a consequence of \cite[Chapter~5, Lemma~1.1]{Yafaev:analytic}. The proof of existence of $\Omega_V$ under a weaker assumption on $V$ can be found in \cite[Chapter~5, Theorem 1.12]{Yafaev:analytic}, as well as the relation \eqref{eq:omegakernel}.
Note that our $\varphi_k(x)$ is denoted $\psi_1(x,k)$ there.

By Eq.~\eqref{eq:omegakernel}, the expectation values of the asymptotic current operator are
\begin{equation}\label{eq:jvkernel}
\begin{aligned}
   \hscalar{\psi}{  E_+ & \Omega_V^\ast J(f) \Omega_V E_+  \psi} = \int dx f(x) \\
   \times & \int_0^\infty dp \int_0^\infty dq \, \overline{ \tilde \psi(p)} \, K_{V}(p,q,x) \, \tilde\psi(q)\,,
\end{aligned}
\end{equation}
where
\begin{equation}
 K_{V}(p,q,x) = \frac{i}{4\pi} \Big( \overline{\partial_x\varphi_p(x)} \varphi_q(x) - \overline{\varphi_p(x)} \partial_x \varphi_q(x) \Big)\,.
\end{equation}

For estimating this operator, we will rely on the following pointwise bounds on $\varphi_k$ and $K_V$ which relate them to their spatial asymptotics (the transmitted wave).
\begin{lemma}{\bf \cite{DT:scattering}}\label{lemma:m}
     Let $V \in L^{1+}(\mathbb{R})$. There exist constants $c_V,c_V',c_V'',c_V''' >0$ such that for all $x \in \mathbb{R}$ and $k>0$,
     \begin{align}
	  \lvert \varphi_k(x) \rvert  &\leq c_V (1 + \lvert x \rvert), \label{eq:psibound} \\
	  \lvert \varphi_k(x) - e^{ikx} \rvert  &\leq c_V' \frac{1 + \lvert x \rvert}{1+k}, \label{eq:psiminusfree} \\
	  \lvert \partial_x \varphi_k(x) - ik \varphi_k(x) \rvert  &\leq c_V'' \frac{1}{1+k}, \label{eq:dpsiminusfree} \\
	  \lvert K_V(p,q,x) - \frac{p+q}{4\pi} \overline{\varphi_p(x)} \varphi_q(x) \rvert  &\leq c_V''' (1 + \lvert x \rvert )\,. \label{eq:Kminusfree}
     \end{align}
\end{lemma}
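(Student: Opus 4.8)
The plan is to reduce everything to the Jost solution from the right, import the three pointwise bounds \eqref{eq:psibound}--\eqref{eq:dpsiminusfree} (these are the technical content taken from \cite{DT:scattering}), and then derive the kernel estimate \eqref{eq:Kminusfree} from them by a purely algebraic manipulation. First I would write the physical solution as $\varphi_k = T_V(k)\,\varphi^+_k$, where $\varphi^+_k$ is the Jost solution normalized by $\varphi^+_k(x)\to e^{ikx}$ as $x\to+\infty$. Since $\varphi_k$ is, up to a constant, the unique solution of \eqref{schroe} that is a pure right-moving exponential at $+\infty$ (cf.\ \eqref{eq:asymp}), it must be proportional to $\varphi^+_k$, the constant being fixed by the left asymptotics to equal $T_V(k)$; one also uses $|T_V(k)|\le 1$.

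Next I would pass to $m(x,k):=e^{-ikx}\varphi^+_k(x)$, which solves the Volterra equation $m(x,k)=1+\int_x^\infty D_k(t-x)\,2V(t)\,m(t,k)\,dt$ with $D_k(s)=(e^{2iks}-1)/(2ik)$ and $|D_k(s)|\le\min(s,1/k)\le C(1+s)/(1+k)$ for $s\ge0$. A Neumann-series / Gronwall estimate under $V\in L^{1+}(\mathbb{R})$ yields the uniform-in-$k$ bound $|m(x,k)|\le c\,(1+|x|)$, hence \eqref{eq:psibound}; the factor $(1+|x|)$ is the linear growth of solutions at zero momentum, entering through $D_0(s)=s$. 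The sharper weight $|D_k(s)|\le C(1+s)/(1+k)$ upgrades this to $|m(x,k)-1|\le c\,(1+|x|)/(1+k)$, which, combined with $|T_V(k)-1|\le c/(1+k)$ and the splitting $\varphi_k-e^{ikx}=T_V(k)\big(\varphi^+_k-e^{ikx}\big)+\big(T_V(k)-1\big)e^{ikx}$, gives \eqref{eq:psiminusfree}. For \eqref{eq:dpsiminusfree} I would differentiate the integral equation, using $\partial_x\varphi_k-ik\varphi_k=T_V(k)\,e^{ikx}\,\partial_x m(x,k)$ and $\partial_x m(x,k)=-\int_x^\infty e^{2ik(t-x)}\,2V(t)\,m(t,k)\,dt$; splitting $m=1+(m-1)$, the remainder term is $O(1/(1+k))$ by the estimate on $m-1$, while the leading oscillatory integral $\int_x^\infty e^{2ikt}\,2V(t)\,dt$ is where the decay in $k$ must be extracted.

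The genuinely self-contained step is \eqref{eq:Kminusfree}, which I would obtain algebraically. Writing $u_k:=\partial_x\varphi_k-ik\varphi_k$, so that $\overline{\partial_x\varphi_p}=\overline{u_p}-ip\,\overline{\varphi_p}$ and $\partial_x\varphi_q=u_q+iq\,\varphi_q$, direct substitution into the definition of $K_V$ collapses the plane-wave part and leaves
\begin{equation}
K_V(p,q,x)-\frac{p+q}{4\pi}\,\overline{\varphi_p(x)}\,\varphi_q(x)=\frac{i}{4\pi}\Big(\overline{u_p(x)}\,\varphi_q(x)-\overline{\varphi_p(x)}\,u_q(x)\Big).
\end{equation}
Bounding the two terms by \eqref{eq:dpsiminusfree} and \eqref{eq:psibound} gives $\frac{1}{4\pi}\big(|u_p|\,|\varphi_q|+|\varphi_p|\,|u_q|\big)\le \frac{c_V c_V''}{2\pi}(1+|x|)$, which is \eqref{eq:Kminusfree} with $c_V'''=c_V c_V''/(2\pi)$ (using $1/(1+p),1/(1+q)\le1$).

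The main obstacle is the high-momentum rate $1/(1+k)$ in \eqref{eq:dpsiminusfree} and in $|T_V(k)-1|\le c/(1+k)$: the naive bound on $\partial_x m$ is only uniform, and extracting the decay requires controlling the oscillatory integral $\int_x^\infty e^{2ikt}V(t)\,dt$ uniformly in $x$, which is precisely where the weighted hypothesis $V\in L^{1+}(\mathbb{R})$ and the detailed Jost-function analysis of \cite{DT:scattering} enter; keeping all constants uniform in both $x$ and $k$ is the remaining bookkeeping. Once \eqref{eq:psibound}--\eqref{eq:dpsiminusfree} are in hand, the passage to \eqref{eq:Kminusfree} is elementary.
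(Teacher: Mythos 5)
Your proposal is correct and follows essentially the same route as the paper: the paper likewise reduces to the Deift--Trubowitz function $m(x,k)=\varphi_k(x)e^{-ikx}/T_V(k)$, imports the bounds \eqref{eq:psiminusfree} and \eqref{eq:dpsiminusfree} from \cite[Sec.~2, Lemma~1]{DT:scattering} together with $\lvert T_V(k)\rvert\le 1$ and $T_V(k)=1+O(1/k)$, and treats \eqref{eq:psibound} and \eqref{eq:Kminusfree} as consequences. Your explicit algebraic step for \eqref{eq:Kminusfree} via $u_k=\partial_x\varphi_k-ik\varphi_k$ is a correct filling-in of what the paper leaves implicit.
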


The estimates \eqref{eq:psiminusfree} and \eqref{eq:dpsiminusfree} can be deduced from \cite[Sec.~2, Lemma 1]{DT:scattering}, noting that the function $m(x,k)$ there corresponds to our $\varphi_k(x)e^{-ikx}/T_V(k)$, and that $\lvert T_V(k) \rvert \leq 1$ and $T_V(k) = 1 + O(1/k)$ for large $k$ \cite[Sec.~2, Theorem 1]{DT:scattering}. Eqs.~\eqref{eq:psibound} and \eqref{eq:Kminusfree} are consequences of \eqref{eq:psiminusfree} and \eqref{eq:dpsiminusfree}. The constants $c_V$ etc.\ can in principle be deduced from \cite{DT:scattering} as functions of $V$, but these bounds are not optimal and we will not need them in the following. In Sec.~\ref{sec:examples}, we will consider specific examples for $V$ where $\varphi_k$ and an optimal $c_V$ can be computed.

\bigskip

With this information at hand, we first investigate unboundedness from above and the existence of negative parts of the spectrum, generalizing the results of the free situation.
\begin{theorem}\label{thm:existence-of-backflow-scattering}{\bf(Existence of backflow in scattering situations)}\\
   Let $V \in L^{1+}(\mathbb{R})$. 
   \begin{enumerate}
    \item \label{it:unbAboveV} For every $f>0$, there is no finite upper bound on the operator $E_+ \Omega_V^\ast J(f) \Omega_V E_+ $.
    \item \label{it:unbBelowV} For every $x \in \Rl$, there is a sequence of normalized right-movers $\psi_n=E_+ \psi_n$ such that $\hscalar{\psi_n}{\Omega_V^\ast J(x) \Omega_V \psi_n} \to -\infty$ as $n \to \infty$.

   \end{enumerate}

\end{theorem}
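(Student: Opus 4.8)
The plan is to observe that both claims are \emph{high-momentum} phenomena, and that at high momentum the potential becomes transparent: by \eqref{eq:psiminusfree} and \eqref{eq:dpsiminusfree} one has $\varphi_k(x)=e^{ikx}+O((1+k)^{-1})$ and $\partial_x\varphi_k(x)=ik\,\varphi_k(x)+O((1+k)^{-1})$ at fixed $x$. Hence the free-case constructions of Thm.~\ref{thm:backflow-bounds-basic}\ref{it:backflowubabove} and of the second part of Prop.~\ref{proposition:jx-unbounded} should transfer to the interacting current $j_{\Omega_V\psi}$, with the interaction contributing only lower-order corrections that are controlled uniformly by Lemma~\ref{lemma:m}.

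For part \ref{it:unbAboveV} I would reuse the momentum-shifted right-movers $\tilde\psi_n(p):=\tilde\psi(p-n)$, where $\tilde\psi\in C_0^\infty(\Rl)$ and $\psi(x)\neq0$ on $\supp f$. In \eqref{eq:jvkernel} I would split the kernel via \eqref{eq:Kminusfree} as $K_V(p,q,x)=\frac{p+q}{4\pi}\overline{\varphi_p(x)}\varphi_q(x)+r_V(p,q,x)$ with $|r_V|\leq c_V'''(1+|x|)$. The remainder term, integrated against $f$ and $\overline{\tilde\psi_n}\,\tilde\psi_n$, stays bounded uniformly in $n$ because $\lVert\tilde\psi_n\rVert_1$ is independent of $n$ and $\int dx\,f(x)(1+|x|)<\infty$. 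Writing $p+q=2n+(p-n)+(q-n)$ in the main term, the $2n$-part equals $n\int dx\,f(x)\,|(\Omega_V E_+\psi_n)(x)|^2$, while the bounded multipliers $(p-n),(q-n)$ give an $O(1)$ contribution by \eqref{eq:psibound}. Finally, \eqref{eq:psiminusfree} shows that $(\Omega_V E_+\psi_n)(x)$ differs from the free packet $e^{inx}\psi(x)$ by $O(1/n)$ pointwise, so $\int dx\,f(x)\,|(\Omega_V E_+\psi_n)(x)|^2\to\int dx\,f(x)\,|\psi(x)|^2>0$; the leading term then diverges like $n$, establishing unboundedness above.

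For part \ref{it:unbBelowV} I would mirror the two-scale superposition of Prop.~\ref{proposition:jx-unbounded}, taking $\tilde\psi_n(p):=\alpha\tilde\chi(p)+\beta\tilde\chi(p-n)$ with $\tilde\chi\in C_0^\infty(\Rl_+)$. By \eqref{eq:omegakernel} the interacting state is $\alpha\,\Omega_V\chi+\beta\,\Omega_V\chi_n$ (with $\chi_n$ the momentum shift of $\chi$), so $j_{\Omega_V\psi_n}(x)$ is a hermitian form $(\overline\alpha,\overline\beta)\,M_n\,(\alpha,\beta)^t$. Using $(\Omega_V\chi_n)(x)=e^{inx}\chi(x)+O(1/n)$ from \eqref{eq:psiminusfree} and $\partial_x(\Omega_V\chi_n)(x)=in\,(\Omega_V\chi_n)(x)+O(1)$ from \eqref{eq:dpsiminusfree} and \eqref{eq:psibound}, I expect $M_n=n A_n^V+O(1)$, where $A_n^V$ has vanishing $(1,1)$-entry (the fixed current $j_{\Omega_V\chi}(x)$ is of lower order), $(2,2)$-entry $|\chi(x)|^2$, and off-diagonal $\tfrac12 e^{inx}\overline{(\Omega_V\chi)(x)}\,\chi(x)$. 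Its determinant tends to $-\tfrac14|(\Omega_V\chi)(x)|^2|\chi(x)|^2$ while the trace stays positive, so the smaller eigenvalue converges to a strictly negative value. Choosing $(\alpha,\beta)$ along the corresponding eigenvector and invoking the prefactor $n$ then drives $j_{\Omega_V\psi_n}(x)\to-\infty$, the norms $\lVert\psi_n\rVert$ stabilizing for large $n$ by the disjoint-support argument already used in Prop.~\ref{proposition:jx-unbounded}.

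The hard part will be the last step of part \ref{it:unbBelowV}: the determinant above is strictly negative only if \emph{both} $\chi(x)\neq0$ and $(\Omega_V\chi)(x)\neq0$, for otherwise $A_n^V$ becomes positive semidefinite in the limit and the argument collapses. Thus the crux is to exhibit one $\chi$ meeting both conditions. This is possible because $\tilde\chi\mapsto\chi(x)$ and $\tilde\chi\mapsto(\Omega_V\chi)(x)=\frac{1}{\sqrt{2\pi}}\int_0^\infty dk\,\varphi_k(x)\,\tilde\chi(k)$ are two nonzero continuous linear functionals on the infinite-dimensional space $C_0^\infty(\Rl_+)$; the second is nonzero since $k\mapsto\varphi_k(x)$ is not identically zero (indeed $|\varphi_k(x)|\to1$ as $k\to\infty$ by \eqref{eq:psiminusfree}). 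Their kernels are proper subspaces, and a finite union of proper subspaces cannot exhaust the space, so a suitable $\chi$ exists.
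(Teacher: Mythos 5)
Your proposal is correct and rests on the same mechanism as the paper's proof: high-momentum transparency of the potential, encoded in Lemma~\ref{lemma:m}, transfers the free-case constructions to the asymptotic current. For part~\ref{it:unbAboveV} the two arguments are essentially identical up to bookkeeping --- the paper subtracts $J(f)$ and shows that $\hscalar{\psi_n}{(\Omega_V^\ast J(f)\Omega_V - J(f))\psi_n}$ stays bounded via \eqref{eq:Kminusfree}, \eqref{eq:psibound} and \eqref{eq:psiminusfree}, while you extract the divergent term $n\int dx\, f(x)\,|(\Omega_V E_+\psi_n)(x)|^2$ directly; both reduce to the same estimates. For part~\ref{it:unbBelowV} your route is a genuine (and arguably preferable) variant. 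The paper again argues via boundedness of the difference from the free expectation value and defers the details to ``a suitable choice for $\chi$''; taken literally this is delicate, because for the two-scale state $\alpha\chi+\beta\chi_n$ the cross term of that difference contains a contribution of order $n\,\overline{\big[(\Omega_V\chi)(x)-\chi(x)\big]}\,\chi(x)$, which is not bounded in $n$ and must instead be argued to be small relative to the negative eigenvalue contribution by tuning $\chi$ (e.g.\ supporting $\tilde\chi$ at high momenta). You avoid this issue by redoing the $2\times2$ eigenvalue analysis directly for the interacting quadratic form, obtaining the off-diagonal entry $\tfrac12 e^{inx}\overline{(\Omega_V\chi)(x)}\,\chi(x)$ and the clean criterion that both $\chi(x)\neq0$ and $(\Omega_V\chi)(x)\neq0$, whose satisfiability you correctly justify by noting that two nonzero linear functionals cannot both vanish on all of $C_0^\infty(\Rl_+)$. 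This makes the paper's ``suitable choice of $\chi$'' fully explicit. The only small point worth adding is a justification for differentiating \eqref{eq:omegakernel} under the integral when computing $\partial_x(\Omega_V\chi_n)$, which follows from \eqref{eq:dpsiminusfree} and \eqref{eq:psibound} for compactly supported $\tilde\chi$.
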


Point \ref{it:unbBelowV} implies in particular that $\beta_V(f) < 0$ for certain positive $f$, that is, averaged backflow exists in all scattering situations. In the free case, we were able to show that $\beta_0(f) < 0$ for \emph{all} positive $f$  (Thm.~\ref{thm:backflow-bounds-basic}\ref{it:backflowallf}), but we currently have no proof of the analogous statement for $\beta_V$ in the interacting situation.

\begin{proof}
   \ref{it:unbAboveV} As in the proof of Thm.~\ref{thm:backflow-bounds-basic}\ref{it:backflowubabove}, we pick a right-mover $\psi$ and shift it to higher and higher momentum, $\tilde\psi_n(p):=\tilde\psi(p-n)$. Then all $\psi_n$ are right-movers and $\|\psi_n\|=\|\psi\|$ for all $n$.
   In view of the unboundedness of $\hscalar{\psi_n}{ E_+J(f)E_+ \psi_n}$ from above (Thm.~\ref{thm:backflow-bounds-basic}\ref{it:backflowubabove}), it suffices to show that $\hscalar{\psi_n}{ (\Omega_V^\ast J(f) \Omega_V - J(f)) \psi_n}$ is bounded as $n \to \infty$. In fact, from \eqref{eq:jvkernel} we have
   \begin{equation}\label{eq:jv-unbounded-above}
 \begin{aligned}
  \hscalar{\psi_n}{ &(\Omega_V^\ast J(f) \Omega_V - J(f)) \psi_n} = \int dx\,f(x) \int dp\,dq\\
  \times & \Big\{ \overline{\tilde\psi_n (p)} \tilde \psi_n(q) \, 
         \big\lbrack K_V(p,q,x) - \frac{p+q}{4\pi} \overline{\varphi_p(x)} \varphi_q(x)  \big\rbrack\\
   +&  \overline{\tilde\psi (p)} \tilde \psi(q) \,
        \frac{p+q+2n}{4\pi} \big\lbrack\, \overline{\varphi_{p+n}(x)} \varphi_{q+n}(x) - e^{i(q-p)x} \big\rbrack \Big\}.
 \end{aligned}
 \end{equation}
 Due to Eq.~\eqref{eq:Kminusfree}, and since the norms $\|\tilde\psi_n\|_1$ are independent of $n$, the first summand gives a bounded contribution; Eqs.~\eqref{eq:psibound} and \eqref{eq:psiminusfree} yield the same for the second summand.
 This proves (i).---%
   Similarly for \ref{it:unbBelowV}, with $\psi_n$ being the sequence $\psi_n^-$ from Prop.~\ref{proposition:jx-unbounded}, it suffices to show that $\hscalar{\psi_n}{ (\Omega_V^\ast J(x) \Omega_V - J(x)) \psi_n}$ is bounded,
    which follows with the same technique as in \ref{it:unbAboveV}, using a suitable choice for $\chi$.
\end{proof}

Let us now turn to our main result: the boundedness of the backflow constant $\beta_V(f)$ for every fixed non-negative test function $f$. 
To that end, we use $E_+ + E_- = 1$ and split the expression $E_+ \Omega_V^\ast J(f) \Omega_V E_+$ into several terms. Here the product $E_- \Omega_V E_+$ can be rewritten as $E_- (\Omega_V - T_V)E_+$, where $T_V$ acts by  multiplication with $ T_V(k)$ in momentum space. (Hence, $T_V$ commutes with the spectral projection $E_\pm$ and $E_- T_V E_+ = E_- E_+ T_V = 0$.) Additionally, with the aim of controlling the unboundedness of $J(f)$, we multiply it on one side with the operator $(i+P)^{-1}$. We obtain
\begin{equation}
\begin{aligned}
E_+ &\Omega_V^\ast J(f)\Omega_V E_+ = \\
&  E_+ \Omega_V^\ast E_+ J(f) E_+ \Omega_V E_+   \\
+ &E_+ \Omega_V^\ast E_+ J(f) (i+P)^{-1} E_- (i+P)(\Omega_V - T_V) E_+  \\
+ &E_+ (\Omega_V^\ast - T_V^\ast)(-i+P)E_- (-i +P)^{-1}J(f)  \Omega_V E_+\,.
\end{aligned}
\end{equation}
The first term on the right-hand side is bounded below by $\beta_0(f)$, as known from Thm.~\ref{thm:backflow-bounds-basic}\ref{it:backflowbdbelow} and $\|E_+\|=\|\Omega_V\|=1$. This yields
\begin{equation}\label{sum}
\begin{aligned}
&E_+ \Omega_V^\ast J(f)\Omega_V E_+  
\\
&\geq \beta_0(f) 
-  2\| J(f) (i+P)^{-1} \|  \, \| (i+P)(\Omega_V - T_V) E_+ \|  
\\
&\geq \beta_0(f) 
-  2 \| J(f) (i+P)^{-1} \|  ( 2 +  \| P (\Omega_V - T_V) E_+ \| )\,.
\end{aligned}
\end{equation}
It remains to show that the two norms on the right-hand side are finite. From \eqref{eq:Jf-Schrodinger}, one finds  $J(f) = f(X)P -\frac{i}{2}f'(X)$ and hence
\begin{equation} \label{eq:jp}
\lVert  J(f)(i+P)^{-1} \rVert \leq \lVert f \rVert_{\infty} + \frac{1}{2}\lVert f' \rVert_{\infty}\,,
\end{equation} 
where $\|f\|_\infty=\sup_{x\in\Rl}|f(x)|$ denotes the supremum norm.

In order to estimate $\|P (\Omega_V - T_V) E_+ \|$, it will be helpful to express the Schr\"odinger equation in suitable integral form (Lippman-Schwinger equation). It is easily checked that
\begin{equation}
G_k(x) := \frac{\sin(kx)}{k} \Theta(x)
\end{equation}
is a Green's function for the free Schr\"odinger equation, i.e., $-G''_k(x) = k^2 \cdot G_k(x) - \delta(x)$ in the sense of distributions.
The solution $\varphi_k$ is then uniquely determined by
\begin{equation}\label{lippmanschwinger}
\varphi_k(x) = T_V(k)e^{ikx} + \int dy\, 2V(y)G_k(y - x)\varphi_k(y)\,.
\end{equation}
With this information, we now prove the following proposition.
\begin{proposition}\label{proposition:pomegat}
     Let $V \in L^{1+}(\mathbb{R})$. Then 
     \begin{equation}
     \lVert P(\Omega_V - T_V) E_+ \rVert  \leq 2 c_V \lVert V \rVert_{1+}
     \end{equation}
     with the constant $c_V$ from Lemma~\ref{lemma:m}.
\end{proposition}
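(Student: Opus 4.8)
The plan is to realise $P(\Omega_V-T_V)E_+$ explicitly as an integral operator in position space, using the Lippmann--Schwinger representation \eqref{lippmanschwinger}, and then to estimate its norm by combining Plancherel's theorem with Minkowski's integral inequality. Throughout I would work on the dense domain of right-movers $\psi=E_+\psi$ with $\tilde\psi\in C_0^\infty(\Rl_+)$, where all the manipulations below are legitimate, and extend the resulting bound by continuity.

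First I would insert the Fourier representation \eqref{eq:omegakernel} of $\Omega_V E_+$ together with the momentum-space action of $T_V$, so that $((\Omega_V-T_V)E_+\psi)(x)$ becomes $\tfrac{1}{\sqrt{2\pi}}\int_0^\infty dk\,(\varphi_k(x)-T_V(k)e^{ikx})\tilde\psi(k)$, and then substitute \eqref{lippmanschwinger}, which gives $\varphi_k(x)-T_V(k)e^{ikx}=\int_x^\infty dy\,2V(y)\,k^{-1}\sin(k(y-x))\varphi_k(y)$ (using $G_k(y-x)=k^{-1}\sin(k(y-x))\Theta(y-x)$). Applying $P=-i\partial_x$ and differentiating under the integral, the boundary term at the lower limit $y=x$ vanishes because $\sin 0=0$, while the $x$-derivative of the integrand produces a factor $-\cos(k(y-x))$. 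After a Fubini swap of the $k$- and $y$-integrations I arrive at
\[
(P(\Omega_V-T_V)E_+\psi)(x)=i\int dy\,2V(y)\,\Theta(y-x)\,w_y(x),\qquad w_y(x):=\frac{1}{\sqrt{2\pi}}\int_0^\infty dk\,\cos(k(y-x))\,\varphi_k(y)\,\tilde\psi(k).
\]

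The crucial and hardest step is to bound $\lVert\Theta(y-\cdot)\,w_y\rVert_{L^2(dx)}$ uniformly in a way that produces the weight $(1+|y|)$. The obstacle is that the pointwise bound \eqref{eq:psibound} on $\varphi_k$ carries no decay in $k$, so the $k$-integral cannot be controlled by brute force. The resolution is to regard $\varphi_k(y)$, for fixed $y$, as a bounded multiplier in the variable $k$: writing $2\cos(k(y-x))=e^{ik(y-x)}+e^{-ik(y-x)}$ exhibits $w_y$ as a superposition of a Fourier and an inverse-Fourier transform (in the variable $k\mapsto x$) of the functions $k\mapsto e^{\pm iky}\varphi_k(y)\tilde\psi(k)\Theta(k)$. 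Plancherel's theorem then yields $\lVert w_y\rVert_{L^2(dx)}\le(\sup_{k>0}|\varphi_k(y)|)\,\lVert\tilde\psi\rVert\le c_V(1+|y|)\lVert\psi\rVert$ by \eqref{eq:psibound}, and inserting the cutoff $\Theta(y-x)$, a multiplication operator of norm at most $1$, only decreases this norm.

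Finally I would assemble the estimate with Minkowski's integral inequality, treating $P(\Omega_V-T_V)E_+\psi$ as the $L^2(dx)$-valued integral $i\int dy\,2V(y)\,\Theta(y-\cdot)\,w_y$: its norm is at most $\int dy\,2|V(y)|\,\lVert\Theta(y-\cdot)w_y\rVert_{L^2(dx)}\le 2c_V\lVert\psi\rVert\int dy\,(1+|y|)|V(y)|=2c_V\lVert V\rVert_{1+}\lVert\psi\rVert$, which is exactly the claimed bound. The only remaining care concerns justifying the differentiation under the integral sign and the Fubini swaps; both are harmless on the test-function domain thanks to $V\in L^{1+}(\Rl)$ and the uniform bounds of Lemma~\ref{lemma:m}.
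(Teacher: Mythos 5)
Your proposal is correct and follows essentially the same route as the paper: the Lippmann--Schwinger equation turns $P(\Omega_V-T_V)E_+$ into an integral over $y$ against $2V(y)$ of a cosine-kernel operator composed with multiplication by $\varphi_k(y)$, and the bound $2c_V\lVert V\rVert_{1+}$ follows from Plancherel (the paper's operator $I_y$ with $\lVert I_y\rVert\le\sqrt{2\pi}$) together with $\sup_k|\varphi_k(y)|\le c_V(1+|y|)$. The only cosmetic difference is that the paper works weakly against a test vector $\xi$ and integrates by parts, whereas you differentiate the Green's function directly and invoke Minkowski's integral inequality.
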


\begin{proof}
Let $\tilde\psi,\tilde\xi \in C_0^\infty(\Rl)$ with $E_+ \psi = \psi$. Using \eqref{eq:omegakernel}, we can write
\begin{equation}\label{Fminus}
\begin{aligned}
\langle \xi,  & P (\Omega_V - T_V) \psi \rangle 
\\ &=\frac{i }{\sqrt{2\pi}} \int  dx \, \overline{\xi'(x)} \int_0^\infty  dk\, \big(  \varphi_k(x) - T_V(k)e^{ikx} \big) \tilde\psi(k)\,.
\end{aligned}
\end{equation}
In view of the Lippman-Schwinger equation \eqref{lippmanschwinger}, we may rewrite the above expression as
\begin{equation}\label{lippgreen}
\begin{aligned}
\hscalar{ \xi}{ P& (\Omega_V - T_V) \psi  }
 =\frac{i}{\sqrt{2\pi}} \int dx \, \overline{\xi'(x)} \int_0^\infty  dk \int dy\\
 &\qquad \times 2 V(y) G_k(y-x) \varphi_k(y) \tilde\psi(k) \\
& = \frac{2i}{\sqrt{2\pi}} \int dx\, dy  \int_0^\infty  dk \, \overline{\xi(x)} V(y) 
\\ &\qquad \times \cos(k(y - x))\Theta(y - x)  \varphi_k (y) \tilde\psi(k)\,,
\end{aligned}
\end{equation}
where we have used Fubini's theorem and integrated by parts. To estimate this integral, let us introduce the multiplication and integral operators $(M_{y}\psi)(k) := \varphi_k (y)\cdot \tilde\psi(k)$ and $(I_{y} \tilde\psi)(x) := \Theta(y - x) \int_{0}^\infty dk\,  \cos(k(y - x)) \tilde  \psi(k)$. Then, by Lemma~\ref{lemma:m}, we have $\lVert M_{y}  \rVert \leq c_V (1 + \lvert y \rvert)$ for all $y \in \mathbb{R}$. The integral operator $I_{y}$ consists of a projection onto the even and positive momentum part of $\psi$, a multiple of the Fourier transform, multiplication by the Heaviside function and the coordinate change $x \rightarrow y - x$. This implies $\lVert  I_{y} \rVert \leq \sqrt{2\pi}$ for all $y \in \mathbb{R}$, which then yields
\begin{equation}
\begin{aligned}
| \langle \xi,  P(\Omega_V &- T_V) \psi \rangle |   
\\
&\leq \frac{2 \lVert \xi \rVert \lVert \psi \rVert }{\sqrt{2\pi}}  \int dy\, \lvert V(y)  \rvert \lVert I_{y} \rVert \lVert M_{y} \rVert 
\\
&\leq 2 c_V \lVert V \rVert_{1+} \cdot \lVert \xi \rVert \lVert \psi \rVert\,.
\end{aligned}
\end{equation}
As $\xi$ and $\psi$ were taken from a dense subspace of $L^2(\mathbb{R})$ and $E_+ L^2(\mathbb{R})$, respectively, this finishes the proof.
\end{proof}
Combining Eqs.~\eqref{sum} and \eqref{eq:jp} and Prop.~\ref{proposition:pomegat}, we arrive at the following result.
\begin{theorem}\label{thm:scattering-bounds}{\bf(Boundedness of backflow in scattering situations)}\\
For any potential $V \in L^{1+}(\mathbb{R})$ and any non-negative $f$, there exists a lower bound on the backflow:
\begin{equation}\label{finalbound}
\beta_V(f) \geq \beta_0(f) -(2\lVert f \rVert_{\infty} + \lVert f' \rVert_{\infty} ) (2 +2 c_V \lVert V \rVert_{1+}) > -\infty\,.
\end{equation}
Here $c_V$ is the constant from Lemma~\ref{lemma:m}.
\end{theorem}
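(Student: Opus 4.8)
The plan is to read Theorem~\ref{thm:scattering-bounds} as the assembly of three ingredients that are already in place: the operator inequality \eqref{sum}, the regularized norm bound \eqref{eq:jp}, and Proposition~\ref{proposition:pomegat}. The conceptual work --- reducing the interacting backflow constant to the free one plus a bounded correction --- is carried by the $E_+ + E_- = \openone$ splitting that produced \eqref{sum}; what remains for the theorem itself is to pass from this operator inequality to a statement about $\beta_V(f)$ and to check that every constant on the right-hand side is finite.

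First I would take the bottom of the spectrum (the infimum of $\langle \psi, \cdot\, \psi\rangle$ over normalized right-movers $\psi = E_+\psi$) on both sides of \eqref{sum}. Since the right-hand side is a scalar multiple of the identity on $E_+L^2(\Rl)$, its infimum is itself, while the left-hand side becomes $\beta_V(f)$ by the definition \eqref{main}. This yields
\[
\beta_V(f) \geq \beta_0(f) - 2\,\|J(f)(i+P)^{-1}\|\,\big(2 + \|P(\Omega_V - T_V)E_+\|\big).
\]
At this step I would flag two points that make the passage legitimate. The first term of the splitting is bounded below by $\beta_0(f)$ because $\|\Omega_V\| = 1$ and, for the relevant $f \geq 0$ with $f\neq 0$, $\beta_0(f) < 0$ (Thm.~\ref{thm:backflow-bounds-basic}\ref{it:backflowallf}): contracting $E_+J(f)E_+$ with $E_+\Omega_V E_+$ can then only raise the expectation value relative to $\beta_0(f)$ (the case $f\equiv 0$ being trivial). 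The factor $2$ collects $\|(\Omega_V - T_V)E_+\| \leq \|\Omega_V\| + \|T_V\| \leq 2$ together with the triangle inequality $\|(i+P)(\Omega_V - T_V)E_+\| \leq 2 + \|P(\Omega_V - T_V)E_+\|$.

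Next I would substitute the two norm estimates. The regularization $(i+P)^{-1}$ tames the unbounded $P$ inside $J(f) = f(X)P - \frac{i}{2}f'(X)$, and \eqref{eq:jp} gives $\|J(f)(i+P)^{-1}\| \leq \|f\|_\infty + \frac{1}{2}\|f'\|_\infty$; the compensating factor $P$ is exactly the object controlled by Proposition~\ref{proposition:pomegat}, namely $\|P(\Omega_V - T_V)E_+\| \leq 2c_V\|V\|_{1+}$. Inserting both and collecting terms reproduces \eqref{finalbound}. Finiteness of the bound is then immediate: $\beta_0(f) > -\infty$ by Thm.~\ref{thm:backflow-bounds-basic}\ref{it:backflowbdbelow}, the quantities $\|f\|_\infty$ and $\|f'\|_\infty$ are finite since $f\in\Ss(\Rl)$, the constant $c_V$ is finite by Lemma~\ref{lemma:m}, and $\|V\|_{1+}<\infty$ by the hypothesis $V\in L^{1+}(\Rl)$.

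Granting the earlier results, the theorem itself presents no obstacle --- it is a purely algebraic combination. The genuine difficulty sits upstream in Proposition~\ref{proposition:pomegat}, and I would emphasize the structural insight that makes everything fit together: the correct quantity to estimate is $\Omega_V - T_V$ rather than $\Omega_V$. Subtracting the transmission operator $T_V$ (which commutes with $E_\pm$ and satisfies $E_-T_V E_+ = 0$) removes the part of the dynamics that preserves momentum sign and isolates the genuinely momentum-mixing, reflection-type contribution; it is precisely this difference that the Lippman-Schwinger representation \eqref{lippmanschwinger}, together with the pointwise scattering bounds \eqref{eq:psibound} and \eqref{eq:psiminusfree} of Lemma~\ref{lemma:m}, renders bounded after the additional factor $P$.
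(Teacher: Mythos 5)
Your proposal is correct and follows the paper's own route exactly: the theorem is obtained by combining the operator inequality \eqref{sum} with the norm bound \eqref{eq:jp} and Proposition~\ref{proposition:pomegat}, then passing to the bottom of the spectrum. Your explicit remark that the contraction argument for the first term requires $\beta_0(f)\leq 0$ (guaranteed by Thm.~\ref{thm:backflow-bounds-basic}\ref{it:backflowallf} for $f\neq 0$) is a detail the paper leaves implicit, and is a welcome clarification rather than a deviation.
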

Thus the backflow effect is limited for short-range potentials in the class $L^{1+}(\mathbb{R})$. Here the falloff of $V$ at large $|x|$ was important for our argument, as otherwise several of the integrals considered would not be finite. On the other hand, the specified behavior at short distances, namely, that $V$ is a locally integrable function, is not strictly required, and it is not too hard to generalize the result for situations with delta-like point interactions.

Let us illustrate this for a finite sum of delta potentials, $V(x)=\sum_j \lambda_j \delta(x-x_j)$. In this situation, the M\o{}ller operator $\Omega_V$ still exists and has the form \eqref{eq:omegakernel} \cite{Duchene:2011}; the solutions $\varphi_k$ of the Schr\"odinger equation, which are piecewise a superposition of two plane waves, fulfill the bound $|\varphi_k(x)| \leq c_V$ with some constant $c_V$, and they satisfy the Lippman-Schwinger equation \eqref{lippmanschwinger} in the sense of distributions. We can then follow a similar argument as above, and conclude
\begin{equation}
\begin{aligned}
&\big(P  (\Omega_V - T_V)E_+ \psi \big)(x) = 
- \sum_j \frac{2 \lambda_j}{\sqrt{2\pi}i} 
\\ &\times  \int_0^\infty dk \cos(k(x-x_j)) \Theta(x_j-x) \varphi_k(x_j) \tilde\psi(k)\,. 
\end{aligned}
\end{equation}
The multiplication and integral operators $M_{y}$ and $I_{y}$ assume the expressions $(M_j \psi)(k) := \varphi_k(x_j) \cdot \tilde\psi(k)$ and $(I_j \tilde\psi)(x) := \Theta(x_j-x) \int_0^\infty dk\, \cos(k(x-x_j)) \tilde\psi(k)$, and we obtain $\lVert M_j \rVert \leq c_V$ and $\lVert I_j \rVert \leq \sqrt{2\pi}$. Hence, we have
\begin{equation}
\begin{aligned}
\lVert P(\Omega_V - T_V)E_+ \rVert 
&\leq  \frac{2 }{\sqrt{2\pi}}\sum_j\lvert \lambda_j \rvert \lVert I_j \rVert \cdot \lVert  M_j \rVert 
\\
&\leq 2 c_V \sum_j \lvert \lambda_j \rvert\,, 
\end{aligned}
\end{equation} 
and \eqref{finalbound} becomes
\begin{equation}\label{delta}
\beta_V(f) \geq \beta_0(f) -(2\lVert f \rVert_{\infty} + \lVert f' \rVert_{\infty} )\cdot (2 + 2 c_V \sum_j \lvert  \lambda_j \rvert )\,.
\end{equation}
Thus our lower bounds also hold for point interactions.

It may seem surprising at first that $\beta_V(f) > -\infty$ even in the presence of a reflecting potential. One can understand this qualitatively as follows. As discussed in Sec.~\ref{sec:free}, unboundedness below of the smeared current $J(f)$ is a high-momentum effect, with contributions growing like $O(p)$. In our class of potentials, however, the reflection coefficient $R(p)$ approaches zero at high momenta, at least like $O(1/p)$ \cite{DT:scattering}. This exactly compensates the high-momentum divergence of $J(f)$ and leads to limits on backflow like in the free case.

\section{Examples} \label{sec:examples}

In the previous section, we have shown that the backflow effect is limited in generic scattering situations, i.e., that the operator $E_+ \Omega_V^\ast J(f) \Omega_V E_+$ is bounded below, but little was said about the actual value of the bound. We now want to investigate this further in specific examples of potentials, both starting from the analytic estimates in Sec.~\ref{sec:scattering} and with numeric methods.
We will investigate the asymptotic backflow in the following potentials:
\begin{enumerate}
 \item the zero potential ($V=0$, free particle) as a reference;
 \item a single delta potential, $V(x)=\lambda \delta(x)$, both in the attractive ($\lambda < 0$) and repulsive ($\lambda>0$) case;
 \item the rectangular potential, $V(x) = \lambda \Theta(1-x)\Theta(1+x)$, again repulsive or attractive;
 \item the P\"oschl-Teller potential \cite{Poeschl:oscillator}, given by
 \begin{equation}
     V(x) = - \frac{\mu(\mu+1)}{2 \cosh^2 x} \quad \text{with } \mu>0\,.
 \end{equation}
 This has the particular property \cite{Lekner:reflectionless} that for integer $\mu$, the potential becomes \emph{reflectionless,} that is, the reflection coefficient vanishes for all momenta. This will allow us to specifically investigate the influence of reflection on the backflow.

\end{enumerate}

For simplicity, we will always take our smearing function $f$ to be a Gaussian with a fixed width $\sigma$ and center $x_0$, i.e.,
\begin{equation}
f(x) = \frac{1}{\sigma \sqrt{2\pi}} \exp \left(- \frac{(x - x_0)^2}{2\sigma^2}  \right). 
\end{equation}

Let us first concretize our analytic estimates. For the free particle, \eqref{eq:bound-for-square} gives
\begin{equation}\label{eq:freeest}
 \beta_0(f) \geq - \frac{1}{32 \pi \sigma^2}\,.
\end{equation}
For the single delta potential, the solution $\varphi_k$ equals $e^{ikx} + R_V (k)e^{-ikx}$ for $x <0$, and $T_V (k)e^{ikx}$ for $x>0$; thus we can choose $c_V=2$, and \eqref{delta} yields
\begin{equation}\label{eq:single-delta-with-gauss}
  \beta_V(f) \geq  -\frac{1}{32\pi \sigma^2} - \big( \frac{2}{\sigma \sqrt{2\pi}} + \frac{1}{\sigma^2 \sqrt{2\pi e}} \big) \cdot (2 + 4|\lambda|)\,.
\end{equation}
In the P\"oschl-Teller case, let us restrict ourselves to the case $\mu = 1$. Here the solution of the Schr\"odinger equation can be explicitly written as
$\varphi_k(x) = e^{ikx} (k +i\tanh x)/(k -i)$, so that $|\varphi_k| \leq 1$; that is, in Lemma~\ref{lemma:m}, we can set $c_V=1$, and Theorem~\ref{thm:scattering-bounds} yields
\begin{equation}\label{poeschl1}
\beta_V(f) \geq -\frac{1}{32\pi \sigma^2} - \big( \frac{2}{\sigma \sqrt{2\pi}} + \frac{1}{\sigma^2 \sqrt{2\pi e}} \big) \cdot (6 + 4\ln 2)
\end{equation}
for this potential. The rectangular potential can be treated with similar methods, using bounds on the explicitly known solution $\varphi_k$, though we skip the details here.

All these are only lower bounds to the backflow $\beta_V(f)$ -- and as we shall see below, they are quite rough estimates. The actual value of $\beta_V(f)$ is not accessible to our explicit computations, even in concrete examples of the potential; it can be obtained only by numerical approximation. In view of Eq.~\eqref{eq:jvkernel}, $\beta_V(f)$ is the lowest spectral value of the integral operator on $L^2(\mathbb{R}_+)$ with kernel
\begin{equation}\label{eq:kvf}
\begin{aligned}
   K_{V,f}(p,q) = & \frac{i}{4\pi} \int  dx \, f(x) \\
   \times& \Big( \overline{\partial_x\varphi_p(x)} \varphi_q(x) - \overline{\varphi_p(x)} \partial_x \varphi_q(x) \Big)\,.
 \end{aligned}
\end{equation}
Knowing the solutions $\varphi_p(x)$ for a given potential $V$, we can use a discretization of the wave functions in momentum space $L^2(\mathbb{R}_+)$ to approximate the integral operator by a hermitean matrix; the lowest eigenvalue of this matrix is then an approximation for $\beta_V(f)$, and we can obtain an approximation of the corresponding eigenfunction as well. Details of the numerical method are described in the Appendix. Let us just mention at this point that this involves an upper cutoff $p_\mathrm{max}$ for the momentum of the wave functions, and a number $n$ of discretization steps; these parameters will enter the approximations below.

\begin{figure}
     \centering
     \begin{subfigure}[t]{0.45\textwidth}
      \includegraphics[width=\textwidth]{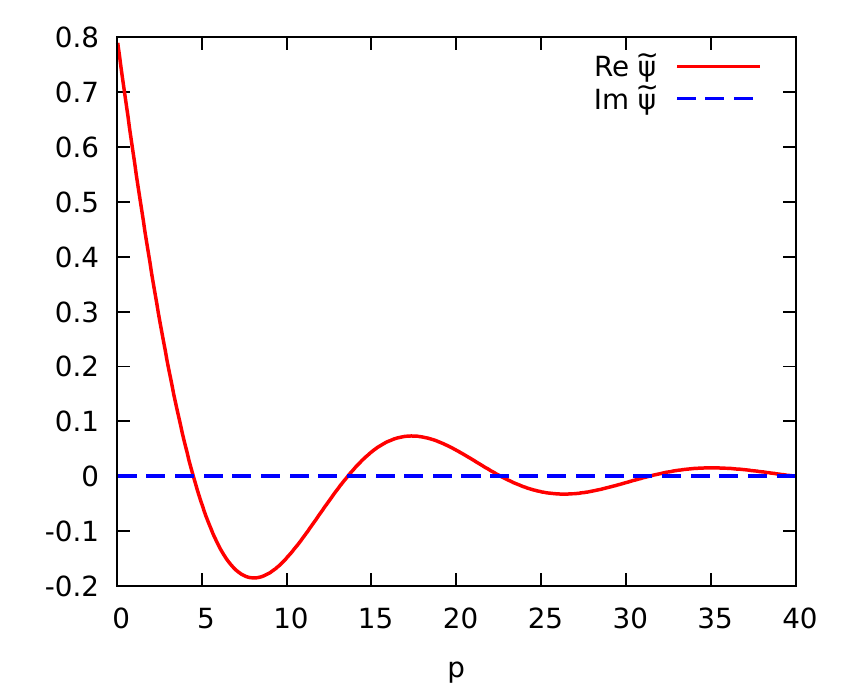}
      \caption{Zero potential}
      \label{fig:eigenvec-free} 
     \end{subfigure}%
     \quad
     \begin{subfigure}[t]{0.45\textwidth}
      \includegraphics[width=\textwidth]{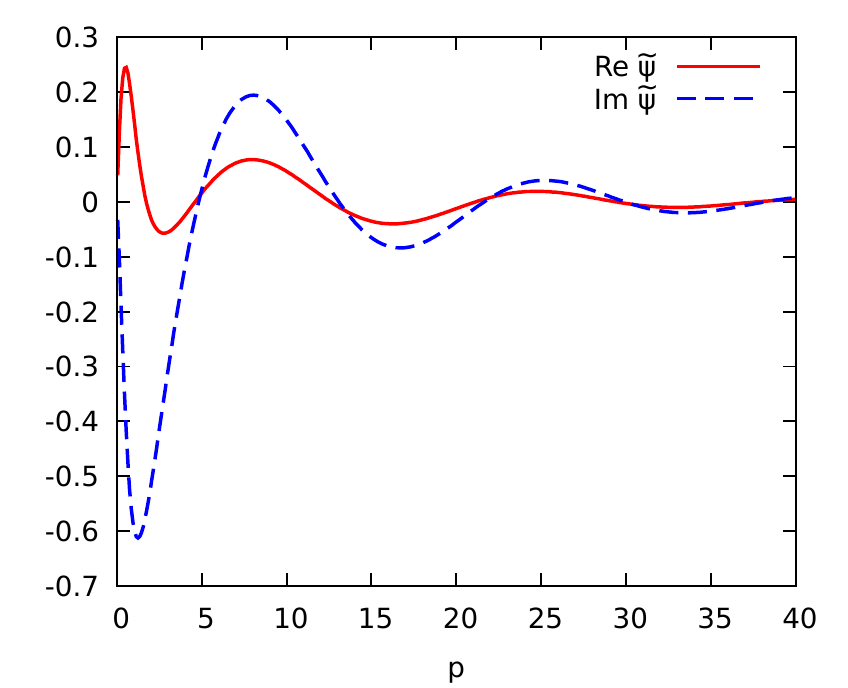}
      \caption{Delta potential, $\lambda=1$}
      \label{fig:eigenvec-delta} 
     \end{subfigure}%
     \caption{Lowest eigenvector of the asymptotic current operator. Parameters: $n=2000$, $p_\mathrm{max}=200$, $x_0=0$, $\sigma = 0.1$.}
     \label{fig:eigenvec}
\end{figure}

We will now analyze the dependence of $\beta_V(f)$ and of the corresponding lowest eigenvector on parameters of the system, both with numeric and analytic methods.

\subsection{Eigenfunctions}
To start, let us look at the momentum space wave function of the eigenvector for the lowest eigenvalue. In the free particle case, the numerically obtained eigenfunction is real-valued and shown in Fig.~\ref{fig:eigenvec-free}. The corresponding eigenvalue is $\beta_0(f)\approx -0.241$, while the estimate in \eqref{eq:freeest} gives $\beta_0(f) \geq -0.995$, almost an order of magnitude from the numerical result. (Here and in the following, the numeric values for $\beta_0(f)$ and $\beta_V(f)$ need to be read in units of $\hbar/m \ell^2$, where $\ell$ was the chosen unit of length.)
The oscillating graph confirms that, as expected from the analytic derivation, backflow is an interference effect between low-momentum and high-momentum portions of the wave function.  Also, the eigenfunction decays quite rapidly at large momenta, showing that our cutoff $p_\mathrm{max}$ in momentum space is at least self-consistent. For a delta potential, Fig.~\ref{fig:eigenvec-delta}, which we take here as a simple example of the interacting situation, the eigenvector has similar qualitative features.

\begin{figure}
     \centering
     \begin{subfigure}[t]{0.45\textwidth}
             \includegraphics[width=\textwidth]{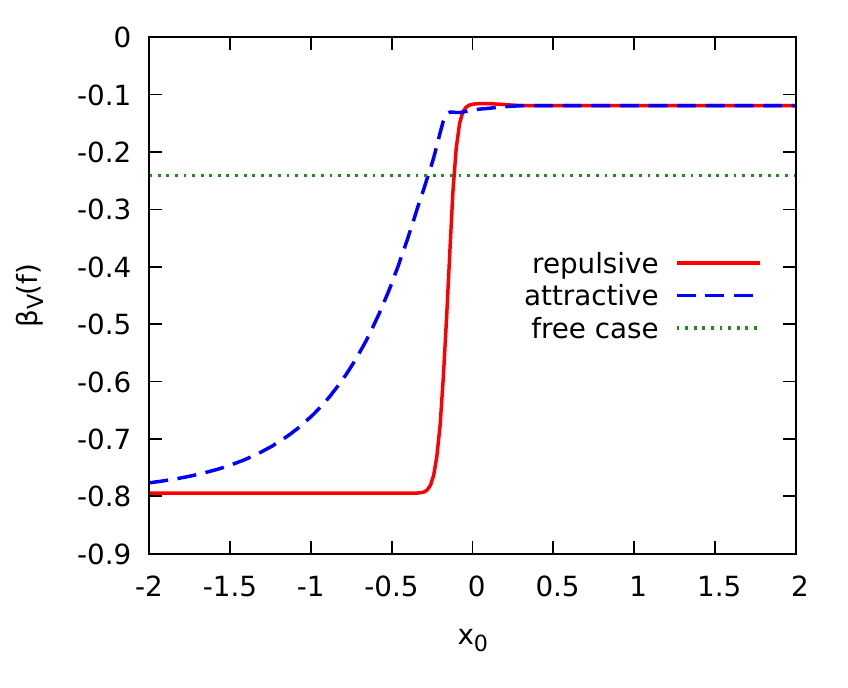}
             \caption{Delta potential, repulsive ($\lambda = 1$) and attractive ($\lambda = -1$)}
             \label{fig:position-delta}
     \end{subfigure}%
     \quad
     \begin{subfigure}[t]{0.45\textwidth}
             \includegraphics[width=\textwidth]{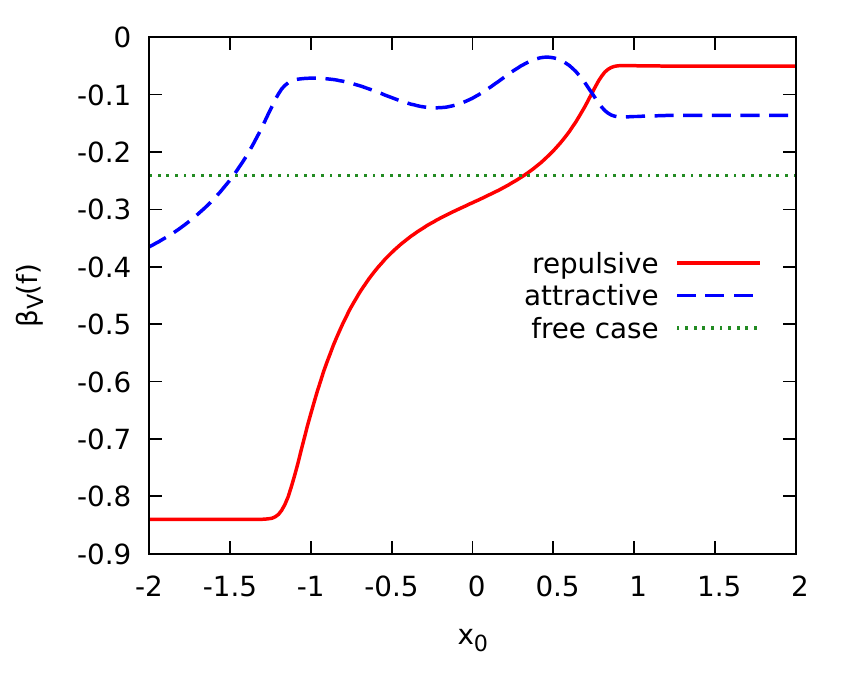}
             \caption{Rectangular potential, repulsive ($\lambda = 2$) and attractive ($\lambda = -2$)}
             \label{fig:position-squarewell}
     \end{subfigure}
     \quad
     \begin{subfigure}[t]{0.45\textwidth}
             \includegraphics[width=\textwidth]{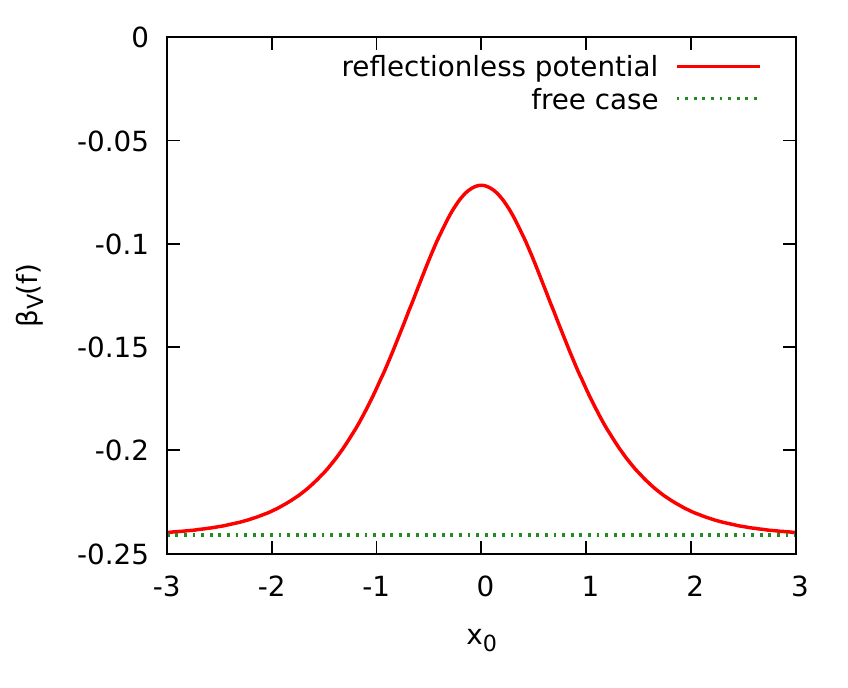}
             \caption{Reflectionless P\"oschl-Teller potential, $\mu=1$}
             \label{fig:position-transparent}
     \end{subfigure}%
     \caption{Backflow bound depending on position. Parameters: $n=1000$, $p_\mathrm{max}=150$, $\sigma = 0.1$.}
     \label{fig:position}
\end{figure}

\subsection{Position of measurement}

Next, we investigate the dependence of the backflow effect on the position of measurement within a potential. That is, we vary the center point $x_0$ of our Gaussian $f$, while its width $\sigma$ remains fixed. 

Fig.~\ref{fig:position-delta} shows the results for a delta potential, both in the attractive and the repulsive case. While these two cases differ, they both have in common that the backflow to the \emph{left} of the potential barrier is larger than in the free case, which can be interpreted as an effect of reflection at the barrier. The backflow to the \emph{right} of the barrier is lower than for a free particle, owing to low-energy contributions being reflected and hence not contributing to the interference effect. For a comparison of the analytic estimate \eqref{eq:single-delta-with-gauss} with the numerical result, we note that for the chosen parameters, \eqref{eq:single-delta-with-gauss} gives  $\beta_V(f) \geq -194.050$ in all cases, which is compatible with Fig.~\ref{fig:position-delta} but certainly a very rough estimate. 

In a rectangular potential, Fig.~\ref{fig:position-squarewell}, we can observe similar effects: higher backflow to the left, lower backflow to the right of the potential. The differences between the repulsive and attractive case are more pronounced however, in particular one observes resonance effects in the interaction region of the attractive potential. Note that the attractive potential in question, with $\lambda=-2$, has two bound states.  

The situation is different in a reflectionless P\"oschl-Teller potential (Fig.~\ref{fig:position-transparent}; we consider $\mu=1$). Here the backflow constant approaches the free value $\approx 0.241$ both left \emph{and} right of the potential, which may be explained by the absence of reflection: the particle behaves like a free one far away from the potential, except for a momentum-dependent phase shift, which does not influence the lowest eigenvalue. Inside the interaction region, the backflow effect is smaller than in the free case, which is not surprising given that the potential is attractive, i.e., classically the particle velocity is higher than in the free case.
The analytic estimate \eqref{poeschl1} yields $\beta_V(f) \geq -283.261$.

\begin{figure}
     \centering
     \begin{subfigure}[t]{0.45\textwidth}
             \includegraphics[width=\textwidth]{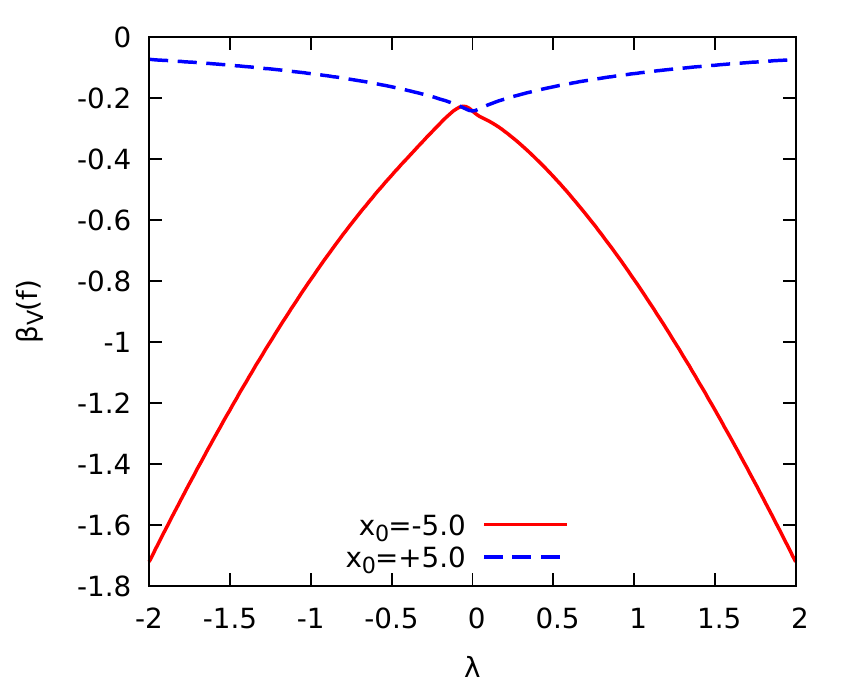}
             \caption{Delta potential}
             \label{fig:amplitude-delta}
     \end{subfigure}%
     \quad
     \begin{subfigure}[t]{0.45\textwidth}
             \includegraphics[width=\textwidth]{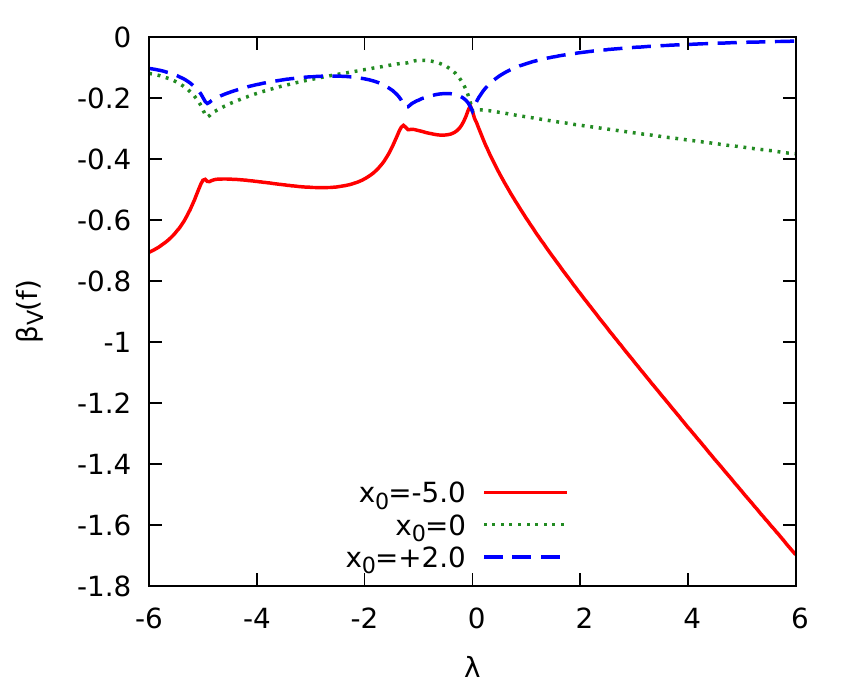}
             \caption{Rectangular potential}
             \label{fig:amplitude-squarewell}
     \end{subfigure}%
     \quad
     \begin{subfigure}[t]{0.45\textwidth}
             \includegraphics[width=\textwidth]{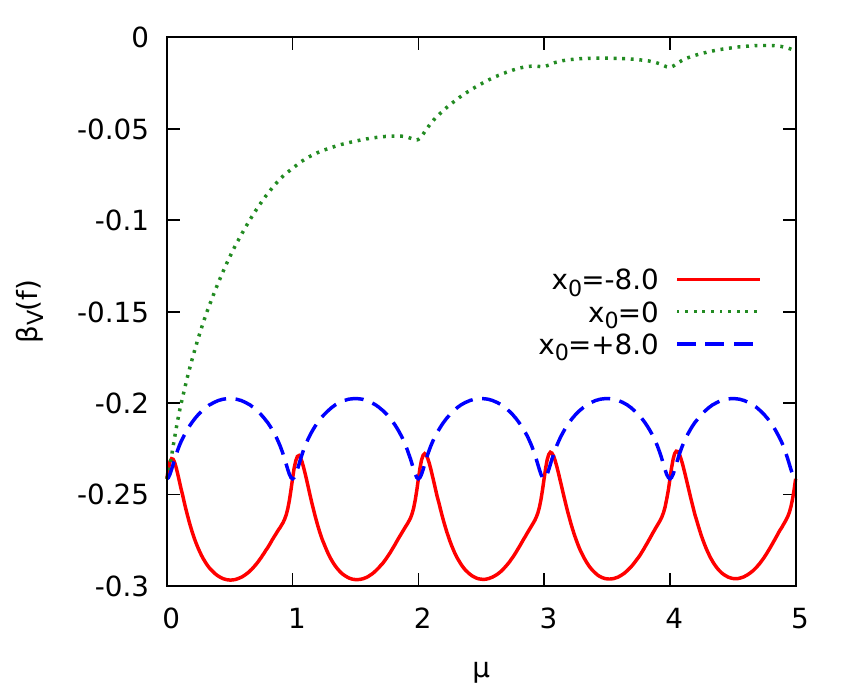}
             \caption{P\"oschl-Teller potential}
             \label{fig:amplitude-poeschlteller}
     \end{subfigure}%
     \caption{Backflow depending on strength of the potential. Parameters: $n=1000$, $p_\mathrm{max}=150$, $\sigma = 0.1$.}
     \label{fig:amplitude}
\end{figure}

\subsection{Strength of potential}

Fig.~\ref{fig:amplitude-delta} shows the backflow far to the left and far to the right of a delta potential with varying amplitude $\lambda$. As expected, backflow on the left of the potential grows with increasing $|\lambda|$, regardless whether attractive or repulsive, while backflow on the right decreases in these situations. The slight asymmetry of the curve near $\lambda=0$ can likely be attributed to the fact that $x_0=-5$ is not sufficiently ``far away'' from the interaction zone in this parameter region.
The analytic estimate \eqref{eq:single-delta-with-gauss}, for fixed $\sigma = 0.1$ and varying $\lambda$, yields $\beta_V(f) \geq -65.347 -128.704\lvert \lambda \rvert$. While this is again very rough in absolute terms, the linear increase for large $\lambda$ appears to match the numeric results.

Next let us turn to the rectangular potential for varying strength $\lambda$, see Fig.~\ref{fig:amplitude-squarewell}. In the repulsive case ($\lambda>0$), the behavior far away from the potential is similar to the delta potential case; the backflow in the interaction region at $x_0=0$ interpolates between the left and right asymptotics. For attractive potentials, however, resonance effects appear to contribute significantly. Note in particular the cusps in the graph near $\lambda \approx -1.2$ and $\lambda \approx -4.9$, which are the points where the number of bound states changes, and hence zero-energy resonances occur. 

Finally,  Fig.~\ref{fig:amplitude-poeschlteller} shows the backflow to the left, to the right, and within the interaction zone of a P\"oschl-Teller potential, for not necessarily integer values of $\mu$. One readily observes that the integer values, where the potential becomes reflectionless, are special in that the backflow far away from the potential matches the free value; for non-integer $\mu$, backflow is generally larger to the left and smaller to the right, which as before can be interpreted as an effect of reflection. The backflow within the interaction zone ($x_0=0$) behaves very differently, and mostly becomes smaller as the strength of the potential increases.

\begin{figure}
     \centering
     \begin{subfigure}[t]{0.42\textwidth}
             \includegraphics[width=\textwidth]{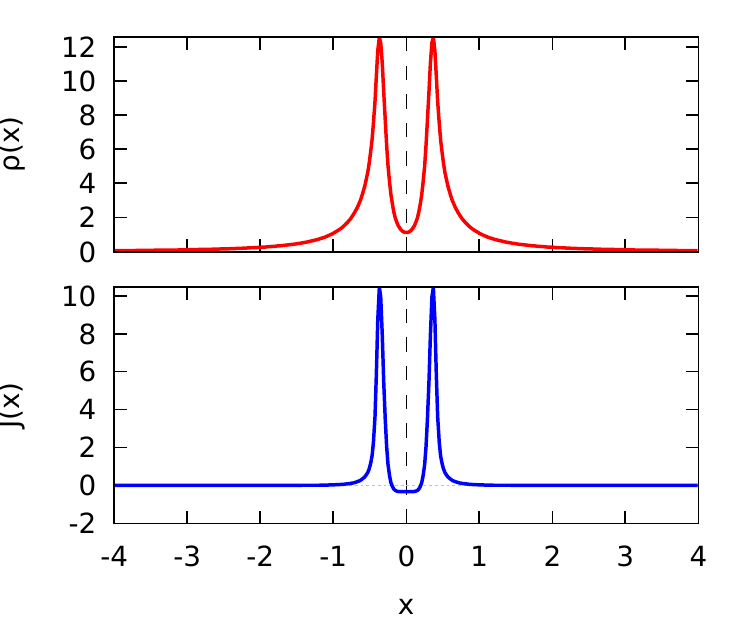}
             \caption{Zero potential, $x_0=0$}
             \label{fig:timeevol-free}
     \end{subfigure} 
     \\
     \begin{subfigure}[t]{0.42\textwidth}
             \includegraphics[width=\textwidth]{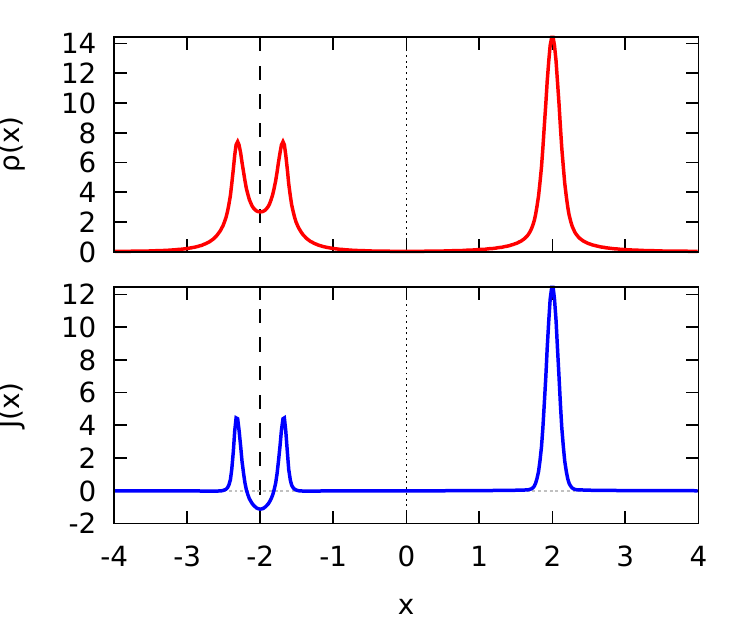}
             \caption{Delta potential, $\lambda=1$, $x_0=-2$}
             \label{fig:timeevol-delta-left}
     \end{subfigure} 
     \\
     \begin{subfigure}[t]{0.42\textwidth}
             \includegraphics[width=\textwidth]{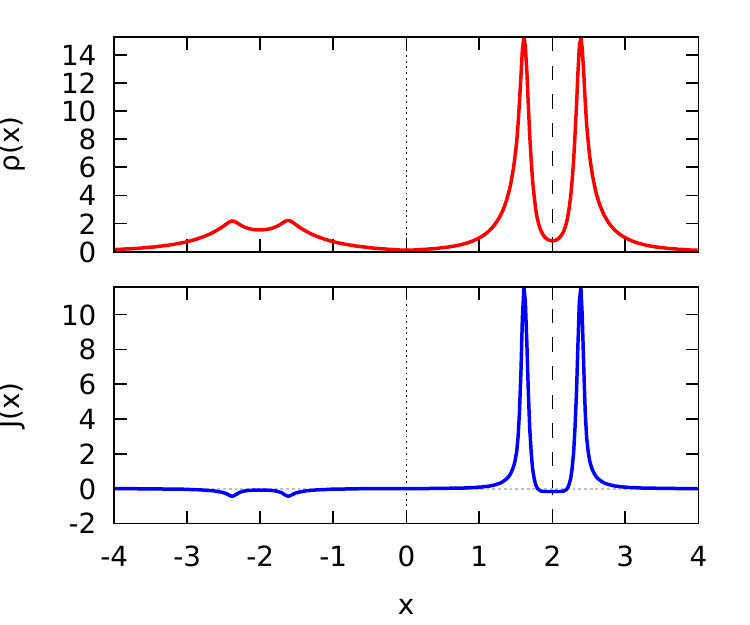}
             \caption{Delta potential, $\lambda=1$, $x_0=+2$}
             \label{fig:timeevol-delta-right}
     \end{subfigure}%
     \caption{Probability distribution and probability current for the maximum backflow vector in configuration space at $t=0$. Dashed vertical lines indicate the position of measurement $x_0$, dotted vertical lines the position of the potential. Parameters: $n=2000$, $p_\mathrm{max}=200$, $\sigma=0.1$. See the animations \cite{anc-arxiv} for evolution with time $t$.}
     \label{fig:timeevol}
\end{figure}

\subsection{Probability current and time evolution}

Lastly, let us look at the shape of the maximum backflow eigenvector $\psi$ in position space. We consider the probability density $\rho(x) = |\psi(x)|^2$ and the probability current $j_\psi(x)=\hscalar{\psi}{ J(x)\psi}$ as a function of $x$, both plotted at time $t=0$ (Fig.~\ref{fig:timeevol}) and as an animation showing the time evolution \cite{anc-arxiv}.

In the free case, Fig.~\ref{fig:timeevol-free}, the situation looks much like the time-smeared backflow eigenvector \cite[Fig.~1]{EvesonFewsterVerch:2003}: The wave packet consists of two forward-moving parts, but there is a negative probability current from the right part to the left part.

For the interacting case, we restrict ourselves to a simple example: a repulsive delta potential, once optimizing for maximum backflow to the left of the potential ($x_0=-2$, Fig.~\ref{fig:timeevol-delta-left})
and once to the right ($x_0=2$, Fig.~\ref{fig:timeevol-delta-right}). It turns out that the behavior is rather similar to the free case, only that the reflected, respectively, transmitted part of the wave function now splits into two wave packets, between which a negative probability current exists.
Note that in Fig.~\ref{fig:timeevol-delta-right}, even if we optimized the vector for maximum backflow around $x_0=+2$, it still happens to exhibit substantial backflow around $x=-2$.

\section{Summary and Outlook}\label{sec:conclusion}

The purpose of this paper was to explore aspects of the backflow effect that go beyond the well-known interaction-free or purely kinematical situation. We have formulated backflow in a general scattering setup by considering states with incoming right-moving asymptotes, interacting with an arbitrary short-range potential. Our results show that the features of the current operator that are typical for backflow in the interaction-free case also persist in the presence of a potential: First of all, the averaged current may produce negative expectation values in asymptotically right-moving states. Moreover, the averaged backflow remains unbounded above but bounded below in this setting, which shows that also the spatial extent of this phenomenon has the same behavior as in the free case.

These findings may be summarized by saying that the main features of backflow are stable under the addition of a potential term to the kinetic Hamiltonian. This stability even holds for arbitrarily strong potentials, meaning that backflow is a universal quantum effect. 

Nevertheless, we saw in examples that the effect becomes more intricate in the presence of a potential. In particular, the maximal amount of backflow $\beta_V(f)$ in scattering situations depends now on the position of the potential $V$ relative to the position of measurement, corresponding to the center of the averaging function $f$. The plots in Fig.~\ref{fig:position} show that far to the right or left of the potential, the maximal backflow converges to a fixed value. Whereas the limit to the right is easy to describe analytically as well, the limit to the left is more complicated: Here an incoming and a reflected wave are superposed, which leads a sum of integral operators, the spectrum of which is difficult to estimate.

The present work was focused entirely on {\em spatial} averages of probability currents. Just as well one could study how {\em temporal} averages respond to the addition of a potential term in the Hamiltonian, thus investigating the Bracken-Melloy constant $\lambda^{H_0}$ for more general time evolutions. Although we did not discuss this point here, let us mention that with the numerical methods at hand, it is easily possible to obtain approximations to $\lambda^{{H_0}+V(X)}$, defined with respect to asymptotic right-movers, once a potential $V$ has been fixed. Non-trivial analytical bounds on this number are however not even known in the free case, and would require new ideas.

We expect that backflow, ultimately being connected to the uncertainty principle, exists in one form or another in various other systems of quantum physics. For example, one could consider a particle with internal degrees of freedom, scattering processes in higher dimensions, or multi-particle systems. On the quantum field theoretic level, so-called quantum energy inequalities describe phenomena which are similar to backflow \cite{EvesonFewsterVerch:2003,Fewster:2012}. To formulate and investigate the whole spectrum of such quantum phenomena in a common framework would, however, require a better understanding of the mathematical core of these effects. We hope to return to this question in the future.

\appendix*
\section{Numerical methods} \label{app:numerics}

The custom computer code which was used to produce the approximations in Sec.~\ref{sec:examples} is supplied with this article \cite{anc-arxiv}, along with documentation. We invite the reader to run it with changed parameters, or indeed to modify the code to accommodate other choices of potentials, etc. Here we briefly describe the numerical methods employed and their relation to the code.

The essential purpose of the code is to approximate the lowest eigenvalues of integral operators, in particular those with the kernel \eqref{eq:kvf}. Let us consider a generic operator $T$ on $L^2(\mathbb{R}_+,dp)$ with smooth kernel $K$ first. Similar to \cite[Sec.~7]{BostelmannCadamuro:oneparticle}, we choose a momentum cutoff $p_\mathrm{max}$, divide the momentum interval $[0,p_\mathrm{max}]$ equally into $n$ subintervals, and choose orthonormal step functions $\tilde\psi_j$ ($j = 0,\ldots, n-1)$ supported on one of these intervals. The operator $T$ is then approximated by 
the matrix $M$ with entries
\begin{equation}\label{eq:mtxe-numeric}
\begin{aligned}
     M_{jk} &= \hscalar{\psi_j}{T \psi_k} = \int dq\,dq\, \tilde\psi_j(p) \,K(p,q) \,\tilde\psi_k(q) 
     \\ &\approx \frac{p_\mathrm{max}}{n} K(p_j, p_k)\,,
\end{aligned}
\end{equation}
where $p_j = (j+\frac{1}{2}) p_\mathrm{max}/n$.
We find the lowest eigenvalue and -vector of $M$, and hence $T$, using the inverse power method: Given an initial lower bound $\lambda_0$ for the operator, and a generic guess $\xi_0$ for the lowest eigenvector, we compute the sequence $(M-\lambda_0)^{-m} \xi_0$, which for $m \to \infty$ converges to the desired lowest eigenvector after normalization. In fact, in order to obtain a good initial estimate, we use this iteration twice, once with a rough guess for $\lambda_0$ and with moderate $n$, and then for a larger $n$, with $\lambda_0$ estimated from the first run. (See \code{kernels.SpectrumTools}.)

Thus we have reduced the question to evaluating the kernel $K$, which in the case of the probability current is given by $K_{V,f}$ in \eqref{eq:kvf}. To evaluate $K_{V,f}$, we need specific information about the potential, namely, the function $\varphi_k$ and its derivative $\partial_x \varphi_k$. (This is modeled by the abstract class \code{models.ScatteringModel} in the code.) Given this, we can evaluate the integral in \eqref{eq:kvf} using Simpson's rule. 

However, the most efficient way of evaluating $\varphi_k$ and its derivative is highly dependent on the potential in question. For the delta as well as rectangular potentials, the explicit solutions of the Schr\"odinger equation are well known and can be used directly, although their discontinuities need to be taken into account in the numerical integration (see \code{models.DeltaPotentialModel} and \code{models.RectangularPotentialModel}). The same holds for the P\"oschl-Teller potential with integer $\mu$; we use this fact for $\mu=1$  (\code{models.SimpleTransparentModel}). For a generic potential, and in particular for P\"oschl-Teller with fractional $\mu$, we solve the Schr\"odinger equation \eqref{schroe} numerically. To that end, we consider the equivalent equation
\begin{equation}
  \partial_x^2 \chi (k,x) = 2 V(x) \chi(k,x) - 2 i k \partial_x \chi(k,x)   
\end{equation}
for the function $\chi(k,x) := \varphi_k(x)e^{-ikx}/T(k)$, and rewrite it as a system of four real-valued first order equations.  
We then solve this ordinary differential equation (ODE) system numerically at fixed $k$ using  an adaptive Runge-Kutta scheme, specifically, the Dormand-Prince method of order 8(5,3) in the form of \cite{Hairer:ode1} as implemented in \cite{commonsmath361}. The initial conditions are $\chi(k,x)=1$, $\partial_x\chi(k,x)=0$ far to the right of the potential. The result for $\chi$ can be cached for each (discretized) $k$, limiting the impact of the numerical ODE solver on overall computation time. See the class \code{models.GenericPotentialModel} for details.

In some cases, numerical integration in \eqref{eq:kvf} can be avoided if $\varphi_k$ is a linear combination of plane waves, and the Fourier transform of $f$ is explicitly known, as in the case of a Gaussian. We make use of this for compactly supported potentials (delta and rectangular) when the position of measurement $x_0$ is far to the left or to the right of the potential; see \code{kernels.AsymptoticCurrentKernel}.

\footnotesize
\bibliography{backflow}

\begin{thebibliography}{10}

\bibitem{Allcock:1969}
G.~R. Allcock.
\newblock The time of arrival in quantum mechanics {I}. {F}ormal
  considerations.
\newblock {\em Ann. Phys.}, 53(2):253--285, 1969.

\bibitem{BrackenMelloy:1994}
A.J. Bracken and G.F. Melloy.
\newblock {Probability backflow and a new dimensionless quantum number}.
\newblock {\em J. Phys. A}, 27(6):2197, 1994.

\bibitem{EvesonFewsterVerch:2003}
S.~P. Eveson, C.~J. Fewster, and R.~Verch.
\newblock Quantum inequalities in quantum mechanics.
\newblock {\em Annales H. Poincar{\'e}}, 6(1):1--30, 2005.

\bibitem{PenzGrublKreidlWagner:2006}
M.~Penz, G.~Gr{\"u}bl, S.~Kreidl, and P.~Wagner.
\newblock {A new approach to quantum backflow}.
\newblock {\em J. Phys. A}, 39:423--433, 2006.

\bibitem{Berry:2010}
M.~V. Berry.
\newblock Quantum backflow, negative kinetic energy, and optical
  retro-propagation.
\newblock {\em J. Phys. A}, 43(41):415302, 2010.

\bibitem{YearsleyHalliwellHartshornWhitby:2012}
J.~M. Yearsley, J~J. Halliwell, R.~Hartshorn, and A.~Whitby.
\newblock {Analytic Examples, Measurement Models and Classical Limit of Quantum
  Backflow}.
\newblock {\em Phys. Rev. A}, 86:042116, 2012.

\bibitem{HalliwellGillmanLennonPatelRamirez:2013}
J.~J. Halliwell, E.~Gillman, O.~Lennon, M.~Patel, and I.~Ramirez.
\newblock {Quantum Backflow States from Eigenstates of the Regularized Current
  Operator}.
\newblock {\em J. Phys. A}, 46:475303, 2013.

\bibitem{PalmeroTorronteguiModugnoMuga:2013}
M.~Palmero, E.~Torrontegui, M.~Modugno, and J.~G. Muga.
\newblock {Detecting quantum backflow by the density of a Bose-Einstein
  Condensate}.
\newblock {\em Phys. Rev. A}, 87:053618, May 2013.

\bibitem{BostelmannCadamuro:oneparticle}
H.~Bostelmann and D.~Cadamuro.
\newblock Negative energy densities in integrable quantum field theories at
  one-particle level.
\newblock {\em Phys. Rev. D}, 93(6):065001, 2016.

\bibitem{MugaLeavens:2000}
J.~G. Muga and C.R. Leavens.
\newblock Arrival time in quantum mechanics.
\newblock {\em Physics Reports}, 338:353--438, 2000.

\bibitem{anc-arxiv}
Ancillary files for this article, including computer source code and
  documentation (\code{backflow-code.jar}) and animated versions of {Fig}.~4
  ({\tt fig4a.mp4}, {\tt fig4b.mp4}, {\tt fig4c.mp4}), are available on
  arxiv.org at \url{https://arxiv.org/abs/1703.04597}.

\bibitem{Yafaev:analytic}
D.~R. Yafaev.
\newblock {\em Mathematical Scattering Theory: Analytic Theory}, volume 158 of
  {\em Mathematical Surveys and Monographs}.
\newblock American Mathematical Society, 2010.

\bibitem{DT:scattering}
P.~Deift and E.~Trubowitz.
\newblock Inverse scattering on the line.
\newblock {\em Comm. Pure Appl. Math.}, 32(2):121--251, 1979.

\bibitem{Duchene:2011}
V.~Duch{\^e}ne, J.~L. Marzuola, and M.~I. Weinstein.
\newblock Wave operator bounds for one-dimensional {S}chr{\"o}dinger operators
  with singular potentials and applications.
\newblock {\em J. Math. Phys.}, 52(1):013505, 2011.

\bibitem{Poeschl:oscillator}
G.~P{\"o}schl and E.~Teller.
\newblock Bemerkungen zur {Q}uantenmechanik des anharmonischen {O}szillators.
\newblock {\em Zeitschrift f{\"u}r Physik}, 83(3):143--151, 1933.

\bibitem{Lekner:reflectionless}
J.~Lekner.
\newblock Reflectionless eigenstates of the $\operatorname{sech}^2$ potential.
\newblock {\em Am. J. Phys.}, 75(12):1151--1157, 2007.

\bibitem{Fewster:2012}
C.~J. Fewster.
\newblock Lectures on quantum energy inequalities.
\newblock {\em arXiv:1208.5399}, 2012.

\bibitem{Hairer:ode1}
E.~Hairer, S.P. N{\o}rsett, and G.~Wanner.
\newblock {\em Solving Ordinary Differential Equations I: Nonstiff Problems}.
\newblock Springer Series in Computational Mathematics. Springer, 2008.

\bibitem{commonsmath361}
Apache~Software Foundation.
\newblock {Commons Math} library, release 3.6.1.
\newblock \url{http://commons.apache.org/proper/commons-math/}, 2016.

\end{thebibliography}
\bibliographystyle{unsrt}

\end{document}